\documentclass[12pt]{article} 

\usepackage[a4paper, top=2.5cm, bottom=2.5cm, left=2.5cm, right=2.5cm]{geometry}
\usepackage{authblk}  
\usepackage{titlesec} 
\titleformat{\section}{\bfseries\large}{\thesection}{1em}{}
\titleformat{\subsection}{\bfseries\normalsize}{\thesubsection}{1em}{}

\usepackage{amsmath}
\usepackage{amsthm}
\usepackage{bm}
\usepackage{amssymb}
\usepackage{graphicx}
\usepackage{subcaption}
\usepackage{lipsum}
\usepackage[normalem]{ulem}
\usepackage{cite}
\usepackage{hyperref}
\usepackage{enumerate}

\usepackage{tikz}
\usetikzlibrary{graphs,patterns,decorations.markings,arrows,matrix,shapes.misc}

\theoremstyle{plain}
\theoremstyle{definition}
\newtheorem{thm}{Theorem}
\newtheorem{lem}{Lemma}

\newtheorem{prop}{Proposition}
\newtheorem{remark}{Remark}

\DeclareMathOperator*{\Res}{Res}
\newcommand{\order}[1]{\mathcal{O}\left(#1\right)}
\newcommand{\mrm}[1]{\mathrm{#1}}
\newcommand{\ket}[1]{\vert #1 \rangle}
\newcommand{\bra}[1]{\langle #1 \vert}
\newcommand{\braket}[1]{\langle #1 \rangle}
\newcommand{\eq}[1]{\begin{equation}\begin{split}#1 \end{split}\end{equation}}
\newcommand{\eqnn}[1]{\begin{equation*}\begin{split}#1 \end{split}\end{equation*}}
\newcommand{\eqs}[1]{\begin{align} #1 \end{align}}
\newcommand{\eqsnn}[1]{\begin{align*} #1 \end{align*}}
\newcommand{\vcenteredinclude}[1]{
    \vcenter{\hbox{\includegraphics[scale=0.25]{#1}}}
}

\newcommand{\ack}{\section*{Acknowledgments}} 
\newcommand{\address}[1]{\affil{\small\itshape #1}}
\newcommand{\ead}[1]{\thanks{Email: #1}}

\begin{document}

\title{\bfseries Integral formula for the propagator of the one-dimensional Hubbard model}

\author[1]{Taiki Ishiyama\thanks{Corresponding author}\ead{ishiyama.t.ad@m.isct.ac.jp}}
\author[1]{Kazuya Fujimoto}
\author[1]{Tomohiro Sasamoto}

\address{Department of Physics, Institute of Science Tokyo, 2-12-1 Ookayama, Meguro-ku, Tokyo 152-8551, Japan}

\date{\today}

\maketitle

\begin{abstract}
\noindent
We present an exact integral formula for the multi-particle propagator
of the one-dimensional Fermi--Hubbard model on an infinite lattice.
The proof is based on the nested Bethe ansatz without relying on the string hypothesis.
Our formula enables an explicit integral representation of the time evolution
of arbitrary finite-particle wave functions
and thereby provides a foundation for the exact analysis of nonequilibrium dynamics
in the Hubbard model.
It can further be applied to related open quantum models.
\end{abstract}

\section{Introduction}
The one-dimensional (1D) Fermi--Hubbard model serves as a fundamental platform for studying strongly correlated electron systems 
in one dimension, where quantum fluctuations enhanced by low dimensionality give rise to 
intriguing phenomena such as Tomonaga--Luttinger liquid behavior and spin-charge separation~\cite{Essler_hubbard}.
A distinctive feature of this model is its integrability.
Leveraging this property, since the seminal work of Lieb and Wu~\cite{Lieb_hubbard}, its spectral properties 
and thermodynamics have been extensively investigated using the nested Bethe ansatz.

In recent years, interest in the 1D Fermi--Hubbard model
has shifted significantly from stationary properties to nonequilibrium dynamics,
motivated by remarkable experimental and theoretical advancements.
On the experimental side,
state-of-the-art ultracold atom experiments have enabled the faithful simulation of real-time evolution 
in isolated quantum many-body systems~\cite{Bloch_2008,Gross_2017}.
In particular, Hubbard-type models can be naturally realized in such platforms,
where the dimensionality, particle statistics, and interaction strength can be widely tuned.
On the theoretical side, 
tensor-network simulations (TNS), including 
the time-evolving block decimation method~\cite{Vidal_2004} 
and time-dependent density matrix renormalization group~\cite{White_2004},
have emerged as efficient numerical tools for simulating the time evolution of 1D quantum many-body systems~\cite{Schollwock_2011,Paeckel_2019}.
Furthermore, generalized hydrodynamics (GHD)~\cite{Bertini_2016,Castro_2016,Doyon_2020} has been formulated,
which describes hydrodynamic behavior at Euler scales for general integrable systems.

While these theoretical frameworks
have successfully elucidated nonequilibrium properties of the 1D Fermi--Hubbard model~\cite{Kajala_2011,Langer_2012,Vidmar_2013,Karrasch_2016, Prosen_2012,Moca_2023,llievski_2017,llievski_2018,Fava_2020, Nozawa_2020,Nozawa_2021},
we note that the accessible time scale of TNS is strongly limited by entanglement growth,
and GHD primarily describes the average ballistic transport of conserved charges.
Consequently, exact microscopic approaches are indispensable to complement these frameworks,
particularly to capture phenomena beyond the regimes of validity of these methods.

Despite the integrability of the 1D Fermi--Hubbard model, performing exact microscopic calculations of
nonequilibrium dynamics remains a formidable challenge.
For simpler quantum integrable systems, such as the Lieb--Liniger Bose gas and the XXZ spin chain,
exact integral representations of the multi-particle propagator
on an infinite interval, often referred to as the Yudson representation~\cite{Yudson_dynamics1,Yudson_dynamics2}, 
are well established via the Bethe ansatz~\cite{Tracy_2008,Iyer_quench,Iyer_exact,Liu_quench,Guan_quench}.
These formulas enable an exact analysis of time evolution for finite particles on the infinite interval without relying on the string hypothesis.
In addition to the infinite interval case, the exact propagator in finite volume has been obtained for the XXZ spin chain~\cite{Feher_2019}.
Recently, several studies have attempted to investigate the nonequilibrium dynamics of finite-density initial states
in the XXZ spin chain by taking the thermodynamic limit of these integral representations
\cite{Feher_2019,Saenz2022, Fujimoto_quantum}.
In parallel, closely related integral formulas for multi-particle propagators have been developed independently
in the field of integrable stochastic interacting systems~\cite{Schutz_1997,Tracy_integral, Prolhac_2011,Tracy_2013}.
In this context, various studies have demonstrated that the use of such formulas together with the Markov duality~\cite{Giardina}
enables the analysis of nonequilibrium dynamics at finite density~\cite{Tracy_Fredholm,Tracy_asymptotics,Borodin_duality,Derrida_current,Borodin_2016}.
However, no exact integral representation of the propagator is known for the 1D Fermi--Hubbard model.

In this work, we derive an exact integral formula for the multi-particle propagator of the 1D Fermi--Hubbard model on an infinite lattice.
The formula is expressed as a multiple contour integral of the Bethe wave function with appropriately chosen integration contours.
Unlike the Lieb--Liniger model and the XXZ spin chain,
the presence of internal degrees of freedom in the 1D Fermi--Hubbard model necessitates the use of the nested Bethe ansatz~\cite{Lieb_hubbard,Essler_hubbard}.
Consequently, the derivation of the formula is technically more involved, requiring us to explicitly employ the nested structure of the Bethe wave function.
The formula provides an explicit representation of the time evolution 
of arbitrary finite-particle wave functions on the infinite lattice. 
It can furthermore be applied to open quantum systems whose effective description is given by the Hubbard model with complex 
interaction~\cite{Medvedyeva_exact,Nakagawa_2021,Yoshida_liouvillian, Marche_universality}.
Thus our formula establishes a foundation for the exact microscopic analysis of nonequilibrium dynamics in these systems.

The paper is organized as follows. In section~\ref{sec:setup}, we define the model and the multi-particle propagator.
In section~\ref{sec:bethe}, we briefly review the nested Bethe ansatz for the 1D Fermi--Hubbard model~\cite{Lieb_hubbard,Essler_hubbard}.
In section~\ref{sec:result}, we derive an integral formula for the propagator
using the nested Bethe ansatz~\cite{Lieb_hubbard, Essler_hubbard}.
We conclude the paper in section~\ref{sec:conclusion}.

\section{Setup: the 1D Fermi--Hubbard model}\label{sec:setup}
In this work, we consider the 1D Fermi--Hubbard model on an infinite lattice $\mathbb{Z}$.
For brevity, we refer to it simply as the Hubbard model throughout the rest of this paper.
The Hamiltonian is given by
\eq{
\hat{H} := - \sum_{x\in \mathbb{Z}}\sum_{\sigma= \downarrow,\uparrow} \left(\hat{c}^\dagger_{x,\sigma} \hat{c}_{x+1,\sigma}
+ \hat{c}^\dagger_{x+1,\sigma}\hat{c}_{x,\sigma}\right) + 4u
\sum_{x\in \mathbb{Z}} \hat{n}_{x,\downarrow} \hat{n}_{x,\uparrow}.
\label{eq:Hubbard}
}
Here, $\hat{c}_{x,\sigma}$ ($\hat{c}_{x,\sigma}^\dagger$) and $\hat{n}_{x,\sigma}:=\hat{c}^\dagger_{x,\sigma} \hat{c}_{x,\sigma}$
denote the annihilation (creation) operator and the number operator of fermions at site $x$ with spin $\sigma$, respectively.
The on-site interaction strength is parameterized by $4u$, where the factor $4$ is introduced for later convenience.
While the interaction parameter $u$ is usually real in \eqref{eq:Hubbard}, 
we allow it to take complex values $u\in \mathbb{C}\setminus\{0\}$ throughout this work. 
We note that the non-interacting case is excluded due to technical reasons 
that will become clear in section~\ref{sec:result}.

The Hamiltonian conserves the total particle number $\hat{N}$ and the number of down-spin particles $\hat{N}_{\downarrow}$,
\eq{
[\hat{H}, \hat{N}] = [\hat{H}, \hat{N}_{\downarrow}] = [\hat{N},\hat{N}_{\downarrow}]= 0,
}
where $\hat{N} := \sum_{x\in \mathbb{Z}} \sum_{\sigma = \downarrow,\uparrow}\hat{n}_{x,\sigma}$ and
$\hat{N}_{\downarrow}:=\sum_{x\in \mathbb{Z}}\hat{n}_{x,\downarrow}$.
Hereafter, we focus on the sector with $N$ particles and $M$ down-spin particles,
assuming $N, M > 0$.

The quantity of interest is the propagator
\eq{
\psi_t(\bm{x};\bm{a}\vert \bm{y};\bm{b}) :=\langle \bm{x};\bm{a}\vert e^{-it\hat{H}} \vert \bm{y};\bm{b}\rangle,
\label{eq:def_pro}
}
where we define the Wannier state with the vacuum state $\ket{0}$ as
\eq{
\ket{\bm{x};\bm{a}} := \hat{c}^\dagger_{x_{N},a_N}\cdots \hat{c}^\dagger_{x_1,a_1}\ket{0}.
}
The propagator describes the transition amplitude that the system, initially prepared in the state $\vert \bm{y};\bm{b}\rangle$, evolves into
the state $\ket{\bm{x};\bm{a}}$ at time $t$.
In terms of the propagator, any $N$-particle state $\ket{\Phi(t)}$ can be expanded as follows,
\eqs{
\ket{\Phi(t)} 
&= 
\frac{1}{(N!)^2 }\sum_{\bm{x},\bm{y}\in \mathbb{Z}^N }\sum_{\bm{a},\bm{b}\in \{\downarrow,\uparrow\}^N }
 \ket{\bm{x};\bm{a} }\bra{\bm{x};\bm{a}}  e^{-it\hat{H}}
\ket{\bm{y};\bm{b}}
\bra{\bm{y};\bm{b}} \Phi(0)\rangle
\\
&= \frac{1}{(N!)^2} \sum_{\bm{x},\bm{y}\in \mathbb{Z}^N }\sum_{\bm{a},\bm{b}\in \{\downarrow,\uparrow\}^N }
\psi_t(\bm{x};\bm{a}\vert \bm{y};\bm{b}) \langle \bm{y};\bm{b}\vert \Phi(0)\rangle \ket{\bm{x};\bm{a}}.
}
Namely, one can exactly calculate any $N$-particle state $\ket{\Phi(t)}$ 
if one obtains the exact solution of $\psi_t(\bm{x};\bm{a}\vert \bm{y};\bm{b})$.

The propagator satisfies the following initial condition,
\eq{
\psi_t(\bm{x};\bm{a}\vert \bm{y};\bm{b})\vert_{t=0}  = \det[\delta_{x_j, y_k}\delta_{a_j, b_k}]_{j,k=1}^N,
\label{eq:initial}
}
and 
the Schr\"{o}dinger equation of the one-dimensional Hubbard model in first quantization~\cite{Essler_hubbard},
\eqs{
&i \frac{d}{dt} \psi_t(\bm{x};\bm{a}|\bm{y};\bm{b})=H_{N}\psi_t(\bm{x};\bm{a}|\bm{y};\bm{b}), \label{eq:hubbard}
\\
&H_{N}:= -\sum_{j=1}^{N}(\Delta_{j}^{+} + \Delta_j^-) +4u \sum_{1\leq j<k\leq N} \delta_{x_j,x_k}, 
}
where we define the shift operator
\eq{
\Delta^{\pm}_j \psi_t(\bm{x};\bm{a}\vert \bm{y};\bm{b}):= \psi_t(\bm{x}\pm \bm{e}_j;\bm{a}\vert \bm{y};\bm{b}).
}
These follow from the definition of the propagator~\eqref{eq:def_pro}.

In section~\ref{sec:result}, we solve this initial value problem and derive a multiple contour integral formula for 
$\psi_t(\bm{x};\bm{a}|\bm{y};\bm{b})$ in the sector with $N$ particles and $M$ down-spins. 
Our derivation employs the nested Bethe ansatz~\cite{Lieb_hubbard,Essler_hubbard}, which is reviewed in section~\ref{sec:bethe}.

We remark that the interaction strength is assumed to take a complex value
$u \in \mathbb{C}\setminus\{0\}$ throughout this work.
This assumption is motivated by the fact that Hubbard models with complex interaction strengths
naturally arise in several open quantum systems.
For instance, the tight-binding chain with dephasing noise is related to the Hubbard model with 
imaginary interaction~\cite{Medvedyeva_exact}.
As another example, the Hubbard model (with real interaction)
subject to two-body loss is effectively described by the Hubbard model with complex interaction~\cite{Nakagawa_2021,Yoshida_liouvillian, Marche_universality}.
See appendix~\ref{app:open} for the application of the propagator \eqref{eq:def_pro} to the analysis of these open systems.

\section{Nested Bethe ansatz}\label{sec:bethe}
It is known that the stationary Schr\"odinger equation of the Hubbard model,
\eq{
H_N \phi(\bm{x};\bm{a}) = E\,\phi(\bm{x};\bm{a}),
\label{eq:stationary}
}
admits exact solutions via the nested Bethe ansatz \cite{Lieb_hubbard,Essler_hubbard}.
In this section, we briefly review the nested Bethe ansatz in the Hubbard model.
The notation used here basically follows chapter 3 of \cite{Essler_hubbard}.

In the sector with $N$ particles and $M$ down spins, the Bethe ansatz solution is
parameterized by the charge rapidities $\bm{z}=(z_1,\dots,z_N)$ and
the spin rapidities $\bm{\lambda}=(\lambda_1,\dots,\lambda_M)$.
Its explicit form reads
\eq{
E(\bm{z}) := -\sum_{j=1}^N (z_j+1/z_j)\label{eq:energy}
}
and
\eq{
&\phi(\bm{x};\bm{a} \vert \bm{z};\bm{\lambda}) := \sum_{P\in S_N} \mathrm{sgn}(P)\langle \bm{a} | \bm{s}P;\bm{\lambda}\rangle
 \prod_{j=1}^N z_{P(j)}^{x_j}
 \label{eq:wave}.
}
Here the scattering amplitude for $N$ particles is defined as
\eq{ 
\langle \bm{a} | \bm{s}P;\bm{\lambda}\rangle &:= 
    \sum_{R\in S_M} \prod_{\substack{1\leq m<n\leq M \\ R^{-1}(n) < R^{-1}(m)}} \Big[\frac{\lambda_m-\lambda_n+2iu}{\lambda_m-\lambda_n-2iu}\Big]
    \\
   & \times
    \prod_{l=1}^M \Big[
    \frac{2iu}{\lambda_{R(l)} - s_{P(\alpha_l)} +iu } \prod_{j = 1}^{\alpha_l-1} \frac{\lambda_{R(l)} - s_{P(j)} -iu}{\lambda_{R(l)} - s_{P(j)} +iu}\Big],
    \label{eq:spin_wave}
}
where $\bm{s}P := (s_{P(1)},\cdots,s_{P(N)})$ and the variable $s_j$ is defined as
\eq{
s_j := (z_j - 1/z_j)/(2i).
\label{eq:s_j}
}
In these expressions, $S_N$ and $S_M$ denote the permutation groups of $N$ and $M$ elements, respectively.
In \eqref{eq:wave}, we assume $x_1\leq \dots \leq x_N$ without loss of generality, utilizing the antisymmetry of the fermionic wave function.
In \eqref{eq:spin_wave}, 
$\alpha_l$ ($l=1,\dots,M$) denotes the position of the $l$-th down spin in the sequence $\bm{a}=(a_1,\cdots,a_N)$.
Since the amplitude \eqref{eq:spin_wave} depends on the spin configuration $\bm{a}$ and corresponds to a vector component
in the auxiliary spin space spanned by $\{\ket{\bm{a}}\}$, we adopt the braket notation.
We remark that the normalization of \eqref{eq:spin_wave} differs from that of equation (3.92) in \cite{Essler_hubbard}.

Equation~\eqref{eq:wave} has a nested structure.
It has the form of a Bethe wave function with respect to the particle coordinates,
while the corresponding scattering amplitude for $N$ particles, given in \eqref{eq:spin_wave},
again takes the form of a Bethe wave function, now defined in an auxiliary spin space.
Specifically, \eqref{eq:spin_wave} can be identified with the Bethe ansatz wave function of 
the inhomogeneous XXX spin chain~\cite{Essler_hubbard,Korepin_1993}; see appendix~\ref{app:yang} for this relation.
In this interpretation, the factor
\eq{
\prod_{\substack{1\leq m<n\leq M \\ R^{-1}(n) < R^{-1}(m)}} \Big[\frac{\lambda_m-\lambda_n+2iu}{\lambda_m-\lambda_n-2iu}\Big]
}
describes the scattering amplitude for $M$ magnons, while
\eq{
\prod_{l=1}^M \Big[
    \frac{2iu}{\lambda_{R(l)} - s_{P(\alpha_l)} +iu } \prod_{j = 1}^{\alpha_l-1} \frac{\lambda_{R(l)} - s_{P(j)} -iu}{\lambda_{R(l)} - s_{P(j)} +iu}\Big]
}
corresponds to the ``plane waves'' for $M$ magnons in the XXX spin chain subject to the inhomogeneities $s_j$.
In particular, if we take all $s_j = 0$, \eqref{eq:spin_wave} reduces to the Bethe wave function of 
the homogeneous XXX spin chain.

The scattering amplitudes \eqref{eq:spin_wave} for a permutation $P$ and its adjacent transposition $P\Pi_{n,n+1}$ 
are related through the $Y$-operator
\eq{
Y_{n,n+1}(\mu) := \frac{\mu \Pi_{n,n+1} + 2iu}{\mu + 2iu} \label{eq:yang}
}
as follows,
\eq{
\ket{\bm{s}P\Pi_{n,n+1};\bm{\lambda}} = Y_{n,n+1}(s_{P(n)}- s_{P(n+1)})\ket{\bm{s}P;\bm{\lambda}}.
\label{eq:sca}
}
Here, the action of a general permutation operator $Q$ on the basis vectors is defined by
\eq{
Q \ket{\bm{a}} := \ket{a_{Q^{-1}(1)}, \cdots, a_{Q^{-1}(N)}}.
}
The relation \eqref{eq:sca} guarantees that the Bethe ansatz solution given in \eqref{eq:energy} and \eqref{eq:wave} 
satisfies the stationary Schr\"{o}dinger equation \eqref{eq:stationary} on the infinite lattice; 
see appendix~B in chapter~3 of \cite{Essler_hubbard}. 
The fact that the scattering amplitudes \eqref{eq:spin_wave} indeed satisfy \eqref{eq:sca} follows from 
the algebraic structure of the inhomogeneous XXX spin chain~\cite{Essler_hubbard,Korepin_1993}; see appendix \ref{app:yang}.

As shown in the next section, the relation~\eqref{eq:sca} plays an important role 
in the proof of the exact expression for the propagator. 
It allows one to express the scattering amplitude for an arbitrary permutation $P$ 
in terms of that for the identity permutation $\mrm{id}$ and a sequence of $Y$-operators. 
Consequently, problems in which scattering amplitudes for general permutations appear 
reduces to the identity case, 
thereby simplifying the analysis.

\begin{remark}
On the infinite lattice, the Bethe ansatz solution is not quantized.
In other words,
$E(\bm{z})$ and $\phi(\bm{x};\bm{a}\vert \bm{z};\bm{\lambda})$ solve the stationary Schr\"{o}dinger equation
for arbitrary rapidities $\bm{z}$ and $\bm{\lambda}$ except in cases 
where either $E(\bm{z})$ or $\phi(\bm{x};\bm{a}\vert \bm{z};\bm{\lambda})$ becomes singular.
This lack of quantization is inconvenient for studying thermodynamic properties of equilibrium states.
To obtain well-defined thermodynamic behavior, one usually considers a finite lattice with periodic boundary conditions, 
which leads to the Lieb-Wu equations that quantize the rapidities~\cite{Lieb_hubbard,Essler_hubbard}.
However, as we will see in the next section, it is advantageous to work on the infinite lattice for the purpose of deriving the exact expression for the propagator.
\end{remark}

\section{Integral formula for the propagator}\label{sec:result}
Here we present a multiple contour integral formula for the propagator~\eqref{eq:def_pro}.

In \eqref{eq:def_pro}, we assume that $x_1\leq \cdots \leq x_N$ ($y_1\leq \cdots \leq y_N$)
and $(x_j,a_j)\neq (x_k,a_k)$ ($(y_j,b_j)\neq(y_k,b_k)$) for any $j\neq k$ without loss of generality
since the propagator is antisymmetric under the exchange of the pairs $(x_j,a_j)$ (or $(y_j,b_j)$).
Then we can state the following theorem.

\begin{thm}\label{thm}
The propagator~\eqref{eq:def_pro} can be expressed as
\eq{
\psi_t(\bm{x};\bm{a}\vert \bm{y};\bm{b}) &= \Big[\prod_{j=1}^N \oint_{|z_j|=r^{N-j}} \frac{dz_j}{2\pi i z_j} z_j^{-y_j}\Big] 
\Big[\prod_{k=1}^M \oint_{\Gamma_{\bm{s}}} \frac{d\lambda_k }{2\pi i (\lambda_k - s_{\beta_k} -iu)} 
\prod_{l=1}^{\beta_k-1} \frac{\lambda_k-s_l + iu}{\lambda_k -s_l-iu } \Big]
\\
&\quad \times
e^{-iE(\bm{z}) t}\phi(\bm{x};\bm{a}\vert \bm{z};\bm{\lambda} ),
\label{eq:formula}
}
where $E(\bm{z})$ and $\phi(\bm{x};\bm{a}\vert \bm{z};\bm{\lambda})$
are given in \eqref{eq:energy} and \eqref{eq:wave}, respectively.
$\beta_k$ denotes the position of the $k$-th down spin
in $(b_1,\cdots,b_N)$.
All integration contours in \eqref{eq:formula} are oriented counterclockwise.
We set $r\ll 1$ so that the poles of $1/(s_j-s_k+2iu)$ do not lie inside the $z_j$-contour for $j<k$; the function $s_j$ is defined
in \eqref{eq:s_j}.
The contour $\Gamma_{\bm{s}}$ consists of sufficiently small closed curves 
enclosing the set of points $\{s_j+iu\}_{j=1}^N$ such that the only possible poles inside the contour are these points
(see figure~\ref{fig:contour}).
The subscript $\bm{s} = (s_1, \cdots, s_N)$ is introduced to highlight the explicit dependence of the contour $\Gamma_{\bm{s}}$
on these variables.
\end{thm}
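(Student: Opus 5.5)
The proof follows the Tracy--Widom/Yudson strategy: show that the right-hand side of \eqref{eq:formula} satisfies the Schr\"odinger equation \eqref{eq:hubbard} and the initial condition \eqref{eq:initial}. Uniqueness of the solution on the lattice then gives the theorem. Uniqueness holds because $H_N$ is a bounded operator on the $N$-particle sector, so the linear ODE has a unique solution.

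\textbf{Step 1: the equation of motion.} For fixed $\bm{z},\bm{\lambda}$ on the contours, $e^{-iE(\bm{z})t}\phi(\bm{x};\bm{a}\vert\bm{z};\bm{\lambda})$ satisfies \eqref{eq:hubbard} pointwise in $\bm{x}$. This is the statement that the Bethe function solves \eqref{eq:stationary} on $\mathbb{Z}$ for arbitrary rapidities, which relies on \eqref{eq:sca}. The integrand is analytic along the contours for $r\ll1$ and small $\Gamma_{\bm{s}}$, so differentiation in $t$ and the shift operators commute with the integrals. Hence the right-hand side solves \eqref{eq:hubbard} for every $t$.

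\textbf{Step 2: the initial condition.} At $t=0$ we expand $\phi$ as a sum over $P\in S_N$. Using \eqref{eq:sca} repeatedly, we write $\ket{\bm{s}P;\bm{\lambda}}$ as a product of $Y$-operators, which depend only on $\bm{s}$, applied to $\ket{\bm{s};\bm{\lambda}}$ with identity permutation. The $\lambda$-dependence then sits entirely in the identity amplitude \eqref{eq:spin_wave}, and we perform the $\lambda$-integrals first.

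\textbf{Step 3: the $\lambda$-integrals.} The integration measure is the ``dual'' plane wave for down-spin positions $\bm\beta$. Its factors $\prod_{l<\beta_k}(\lambda_k-s_l+iu)/(\lambda_k-s_l-iu)$ cancel the corresponding factors of the amplitude. Evaluating residues at $\lambda_k=s_j+iu$, and using the antisymmetry in the $\lambda$'s of the magnon $S$-matrix sum, should produce an orthogonality relation
\[
\Big[\prod_k\oint_{\Gamma_{\bm{s}}}\cdots\Big]\langle\bm{a}\vert\bm{s};\bm{\lambda}\rangle=\delta_{\bm{a},\bm{b}}.
\]
This is the nested analogue of the XXX Yudson identity. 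I would prove it by induction on $M$: the $\lambda_M$ integral picks residues only where the amplitude has poles, namely at $s_{\alpha_l}+iu$, and those residues combine with the prefactors to force $\alpha_M=\beta_M$. The $R$-sum terms with $R\neq\mathrm{id}$ must vanish, because the magnon factors $(\lambda_m-\lambda_n+2iu)/(\lambda_m-\lambda_n-2iu)$ move poles outside $\Gamma_{\bm{s}}$ or cancel them. This is the step I expect to be the main obstacle. In particular, one must check that poles at $\lambda_m=\lambda_n\pm2iu$ never fall inside $\Gamma_{\bm{s}}$ after earlier residues are taken; that is where the conditions $u\neq0$ and $r\ll1$ (no $s_j-s_k+2iu=0$ inside) enter.

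\textbf{Step 4: the $z$-integrals.} After Step 3 the initial value reduces to
\[
\sum_P \mathrm{sgn}(P)\oint\prod_j z_j^{-y_j}z_{P(j)}^{x_j}\,\bigl(\text{$Y$-products}\bigr)\ket{\bm{b}}.
\]
For $P=\mathrm{id}$ the nested contours give $\prod_j\delta_{x_j,y_j}\delta_{\bm{a},\bm{b}}$. For $P\neq\mathrm{id}$, the argument follows Tracy--Widom for ASEP. Choose the largest $j$ with $P(j)\neq j$ and integrate the innermost relevant variable. The $Y$-operators are rational in $s_{P(n)}-s_{P(n+1)}$, and their only poles are at $s_j-s_k=-2iu$, which by the choice of $r$ lie outside the inner contours. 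The exponent of $z$ is then nonnegative, using the ordering $x\le y$ constraints, so the integral vanishes; alternatively, the contributions cancel in pairs. The cases with coinciding positions, where $x_j=x_{j+1}$ with opposite spins, need care: there the $Y$-operator contributes the identity-plus-exchange structure that reproduces the fermionic determinant \eqref{eq:initial}.
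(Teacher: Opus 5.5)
Your overall architecture matches the paper's. The Bethe integrand solves the equation of motion, $\ket{\bm{s}P;\bm{\lambda}}$ is reduced via \eqref{eq:sca} to $Y$-operators acting on the identity amplitude, and the $\lambda$-integrals are done before the $z$-integrals. The gap is that the two orthogonality statements that carry the proof are not established, and the mechanisms you propose for them fail.

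\textbf{Step 3.} First, the pole bookkeeping is off. The amplitude factors $2iu/(\lambda-s_\alpha+iu)\prod_{j<\alpha}(\lambda-s_j-iu)/(\lambda-s_j+iu)$ have their poles at $s_j-iu$, outside $\Gamma_{\bm{s}}$, and \emph{zeros} at $s_j+iu$ for $j<\alpha$. The poles inside $\Gamma_{\bm{s}}$ come from the measure, at $s_j+iu$ with $j\le\beta_k$.

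More seriously, the terms with $R\neq\mathrm{id}$ do not vanish individually once $M\ge 3$. Take $s_j=0$, $M=3$, $\bm{\alpha}=(1,2,3)$, $\bm{\beta}=(3,4,5)$. With $\xi_k=(\lambda_k-iu)/(\lambda_k+iu)$, each term becomes a Taylor coefficient of a product of factors $-(1+\xi_m\xi_n-2\xi_n)/(1+\xi_m\xi_n-2\xi_m)$. One finds that the two permutations with $R(1)=3$ contribute $+4$ and $-4$, and all others contribute $0$. So no argument of the type ``magnon factors push poles outside or cancel them'' can work; you need a cancellation \emph{between} different $R$.

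The paper gets this by splitting $S_M$ according to the value $R(1)=n$. For $n=1$, the $\lambda_1$-integral gives $\delta_{\alpha_1,\beta_1}$ and induction on $M$ applies. For $n>1$ (and $\alpha_1<\beta_n$), it evaluates the $\lambda_n$-integral through the \emph{exterior} poles $\lambda_n=\lambda_m-2iu$. These create poles at $\lambda_m=\lambda_l$ inside the $\lambda_m$-contour, which is why the contours are first deformed to be mutually distinct. The resulting terms then cancel in pairs, $T_s(l,m;R)+T_s(l,m;\Pi_{l,m}R)=0$.

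\textbf{Step 4.} This has the same defect: whichever variable you integrate first, one of your two claims fails.
For the innermost variable $z_1$, the $Y$-operators are indeed regular inside the contour. But the exponent $x_{P^{-1}(1)}-y_1-1$ has no definite sign, because there is no ordering between $\bm{x}$ and $\bm{y}$. The paper disposes of the case $x_2>y_1$ (for $P(1)\ge 3$) exactly this way, and all the work lies in the case $y_1\ge x_2$.
For an outer variable $z_n$, the poles of $Y(s_m-s_n)$ with $m<n$ sit at $z_n\simeq z_m$, which is \emph{inside} the $z_n$-contour.

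The paper inducts on $N$ using both $N-1$ and $N-2$; the class $P(1)=2$ yields the exchange term of the determinant after the substitution $z_2\to z_2/z_1$. For $P(1)=n\ge 3$, it picks up the poles of $Y(s_m-s_n)$ in $z_n$ and then integrates $z_m$. At that stage two things must be shown.
The poles of the $(l,m)$ factors are spurious. This follows from a Yang--Baxter rearrangement producing $(1-\Pi_{j,j+1})(2\Pi_{j-1,j}-1)(1-\Pi_{j,j+1})=0$.
The surviving poles, from $Y(s_l-s_n)$ at $s_n=s_m+2iu$ and located at $s_m=s_l$, give terms that cancel between $P$ and $\Pi_{l,m}P$. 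This is because the iterated residues coincide, by $Y(0)=I$ and unitarity.
Your ``alternatively, the contributions cancel in pairs'' names none of this.

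A minor point: the right-hand side is not a priori in $\ell^2$, since $|z_j|<1$ makes $z_j^{x}$ grow as $x\to-\infty$. Run the uniqueness argument in an exponentially weighted sup-norm space instead, on which $H_N$ is still bounded.
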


\begin{figure*}[htbp] 
\centering
\begin{tikzpicture}[>=stealth, scale=1.0]
    \tikzset{
        cross/.style={
            path picture={ 
                \draw[black, thick]
                (path picture bounding box.south east) -- (path picture bounding box.north west)
                (path picture bounding box.south west) -- (path picture bounding box.north east);
            },
            minimum size=4pt,
            inner sep=0pt,
            outer sep=0pt
        }
    }
    \tikzset{
        contour/.style={
            thick, 
            blue!80!black,
            postaction={decorate},
            decoration={
                markings,
                mark=at position 0.25 with {\arrow{>}}
            }
        }
    }
    \def\u{0.8} 

    \begin{scope}[xshift=-4cm]
        \draw[->, gray] (-3.5, 0) -- (3.5, 0) node[right, black] {$\mathrm{Re}\,\lambda$};
        \draw[->, gray] (0, -2.0) -- (0, 2.5) node[above, black] {$\mathrm{Im}\,\lambda$};
        
        \node[black, font=\bfseries] at (-3.5, 2.5) {(a)};

        \def\points{
            -2.5/1.0/s_1, 
            1.8/0.3/s_2, 
            -0.5/0.6/s_3
        }

        \foreach \x/\y/\txt in \points {
            \node[cross] at (\x, \y+\u) {};
            
            \draw[contour] (\x, \y+\u) circle (0.2);

            \node[cross, red!70!black] at (\x, \y-\u) {};
        }
        
        \node[below, font=\small] at (-2.5, 1.0+\u-0.15) {$s_1+iu$};
        \node[below, font=\small] at (-2.5, 1.0-\u-0.15) {$s_1-iu$};
    \end{scope}

    \begin{scope}[xshift=4cm]
        \draw[->, gray] (-2.5, 0) -- (2.5, 0) node[right, black] {$\mathrm{Re}\,\lambda$};
        \draw[->, gray] (0, -2.0) -- (0, 2.5) node[above, black] {$\mathrm{Im}\,\lambda$};

        \node[black, font=\bfseries] at (-2.5, 2.5) {(b)};

        \node[cross] at (0, \u) {};
        \node[right, font=\small] at (0.1, \u) {$iu$ };
        
        \draw[contour] (0, \u) circle (0.2);

        \node[cross, red!70!black] at (0, -\u) {};
        \node[right, font=\small] at (0.1, -\u) {$-iu$};
    \end{scope}
\end{tikzpicture}
\caption{The integration contour $\Gamma_{\bm{s}}$ in the complex $\lambda$-plane, indicated by blue circles with arrows.
(a) Inhomogeneous case, where $s_j$ are distinct. (b) Homogeneous case ($s_j=0$).
The situation in Theorem~\ref{thm} corresponds to the inhomogeneous case (a), 
since $s_j$ is defined in \eqref{eq:s_j} and the choice of the $z$-contour, $|z_j|=r^{N-j}$ with $r\ll 1$, implies that $s_j$ are well-separated.
In subsection~\ref{sub:spin}, however, we treat $\bm{s} = (s_1,\cdots,s_N)$ as general complex numbers subject 
only to the condition $s_j -s_k\neq 2iu$ for any $j,k$, which includes the homogeneous case (b).}
\label{fig:contour}
\end{figure*}
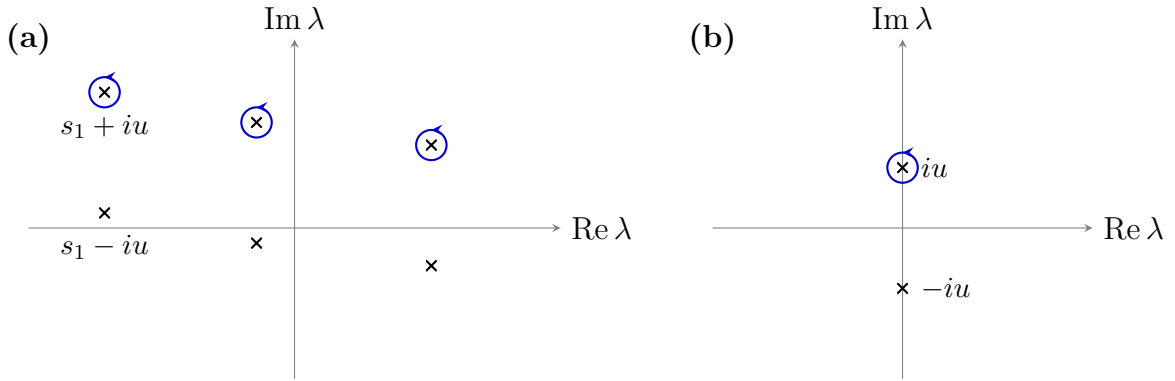

\begin{remark}
Theorem~\ref{thm} holds for any $u \in \mathbb{C} \setminus \{0\}$. 
The case $u=0$ is excluded because the function $\phi(\bm{x};\bm{a}\vert \bm{z};\bm{\lambda})$ vanishes
and the contour $\Gamma_{\bm{s}}$ is no longer viable as $u\to0$ 
due to the coalescence of the points $s_j + iu$ and $s_j - iu$.
\end{remark}

\begin{remark}
In our previous work~\cite{Ishiyama}, 
we derived the integral formula for the propagator for the specific case of $N=2$ and $M=1$ with $y_1 =y_2$.
Theorem~\ref{thm} constitutes a natural generalization to arbitrary $N$ and $M$, 
as well as general initial configurations $\bm{y}$ and $\bm{b}$.
\end{remark}

Theorem~\ref{thm} provides an exact multiple contour integral representation 
of the multi-particle propagator on the infinite lattice. 
Equation~\eqref{eq:formula} renders the time evolution of arbitrary 
finite-particle wave functions analytically accessible in the Hubbard model. 
Furthermore, Theorem~\ref{thm} is applicable to related open quantum models, 
as detailed in appendix~\ref{app:open}. 
Our formula thus provides the foundation for the exact analysis of 
nonequilibrium dynamics in these systems. 

From a practical point of view, our formula is given as an $(N+M)$-fold contour integral, 
and is therefore particularly well suited for few-particle problems in the Hubbard model. 
Nevertheless, in the tight-binding chain with dephasing noise, 
which can be mapped to the Hubbard model~\cite{Medvedyeva_exact} (see appendix~\ref{app:open}), 
few-point correlation functions for finite-density initial conditions
can be computed using the few-particle integral formula. 
This is due to the property known as duality~\cite{Medvedyeva_exact}, 
which reduces the calculation of $n$-point correlation functions in the tight-binding chain 
with dephasing noise to that of $n$-particle density matrices. 
In particular, in our previous work, the case $N=2$ and $M=1$ 
was used to derive exact results for the density profiles 
under domain-wall and alternating initial conditions. 
We refer the interested reader to Ref.~\cite{Ishiyama} for the details.

\bigskip
\noindent
\textbf{Proof of Theorem~\ref{thm}.}~
Our proof consists of the following two steps;
\begin{enumerate}[(a)]
\item the right-hand side of \eqref{eq:formula} satisfies the Schr\"{o}dinger equation~\eqref{eq:hubbard},
\item the right-hand side of \eqref{eq:formula} satisfies the initial condition~\eqref{eq:initial}.
\end{enumerate}
\textbf{Step (a)}. This can be proved by noting the fact that $e^{-iE(\bm{z})t}\phi(\bm{x};\bm{a}|\bm{z};\bm{\lambda})$ is 
the solution to the Schr\"{o}dinger equation~\eqref{eq:hubbard} from the nested Bethe ansatz \cite{Lieb_hubbard,Essler_hubbard}.
\medskip

\noindent
\textbf{Step (b)}. 
The proof of Step (b) is technically involved, and we begin by outlining its strategy. 
We first rewrite RHS of \eqref{eq:formula} at $t=0$ by exploiting the nested structure of the Bethe wave function~\eqref{eq:wave}: 
\eqs{
&\text{RHS of \eqref{eq:formula}}|_{t=0} =  \sum_{P\in S_N} \mathrm{sgn}(P)
\prod_{j=1}^N \Big[\oint_{|z_j|=r^{N-j}} \frac{dz_j}{2\pi i z_j} z_j^{-y_j}\Big] A^{(\bm{s})}_P(\bm{a}\vert \bm{b}) \prod_{j=1}^N z_{P(j)}^{x_j},
\label{eq:charge}
\\
&A^{(\bm{s})}_P(\bm{a}\vert \bm{b}) := \prod_{k=1}^M \Big[\oint_{\Gamma_{\bm{s}}} \frac{d\lambda_k }{2\pi i (\lambda_k - s_{\beta_k} -iu)} 
\prod_{l=1}^{\beta_k-1} \frac{\lambda_k-s_l + iu}{\lambda_k -s_l-iu } \Big]
\braket{\bm{a}\vert \bm{s}P;\bm{\lambda}},
\label{eq:A_P}
}
where $\braket{\bm{a}\vert \bm{s}P;\bm{\lambda}}$ is defined in \eqref{eq:spin_wave}.
Note that $A^{(\bm{s})}_P(\bm{a}\vert \bm{b})$ depends on $z_j$ solely through $s_j=(z_j-1/z_j)/(2i)$.

The goal of Step (b) is to show that \eqref{eq:charge} is equivalent to \eqref{eq:initial}
by first performing the $\lambda$-integrals in \eqref{eq:A_P} and subsequently the $z$-integrals in \eqref{eq:charge}. 
In subsection~\ref{sub:spin}, we carry out the $\lambda$-integrals and derive a representation of 
$A^{(\bm{s})}_P(\bm{a}\vert \bm{b})$ that does not involve the $\lambda$-integrals, 
as stated in Lemma~\ref{lem1} below. 
In subsection~\ref{sub:charge}, substituting this representation into \eqref{eq:charge}, 
we evaluate the $z$-integrals, thereby establishing the equivalence between \eqref{eq:charge} and \eqref{eq:initial}.
This result is formulated as Lemma~\ref{lem2} below.
In what follows, we present Lemma~\ref{lem1} and Lemma~\ref{lem2} and outline their proofs.

In subsection~\ref{sub:spin}, we treat $\bm{s}$ as independent complex parameters 
subject to the condition $s_j - s_k \neq 2iu$ for any $j\neq k$, and prove the following lemma.
\begin{lem}\label{lem1}
Let a permutation $P$ be decomposed into a product of adjacent transpositions as
$P=\Pi_{j_n,j_{n}+1}\cdots \Pi_{j_1,j_1+1}$.
Then the amplitude $A^{(\bm{s})}_P(\bm{a}\vert \bm{b})$ can be written as
\eq{
A^{(\bm{s})}_{P}(\bm{a}\vert \bm{b})  =  
\bra{ \bm{a}} Y_{j_1,j_1+1}(s_{P_1(j_1)} - s_{P_1(j_1+1)}) \cdots Y_{j_n, j_{n}+1} (s_{P_n(j_n)} - s_{P_n(j_n+1)})\ket{\bm{b}},
\label{eq:A_P_1}
}
where we define $P_k := \Pi_{j_n,j_n+1}\cdots \Pi_{j_{k+1},j_{k+1}+1}$ for $k=1,\dots,n-1$ and $P_n := \mrm{id}$.
The $Y$-operator is defined in \eqref{eq:yang}.
\end{lem}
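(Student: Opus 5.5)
Two reductions carry the proof: first compute the identity amplitude $A^{(\bm{s})}_{\mrm{id}}(\bm{a}\vert\bm{b})=\braket{\bm{a}\vert\bm{b}}$ (a Kronecker delta in spin configurations), then transport it to a general $P$ with the exchange relation \eqref{eq:sca}. The order matters, since the $\lambda$-integration does not depend on $P$ except through the integrand.

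\textbf{Step 1: reduction to the identity.} Write $P=\Pi_{j_n,j_n+1}\cdots\Pi_{j_1,j_1+1}$ and $P_k$ as in the statement, so that $P_{k-1}=P_k\Pi_{j_k,j_k+1}$ with $P_0:=P$ and $P_n=\mrm{id}$. Applying \eqref{eq:sca} repeatedly, starting from $\mrm{id}$ and multiplying on the right by $\Pi_{j_n,j_n+1}$, then $\Pi_{j_{n-1},j_{n-1}+1}$, and so on, gives
\[
\ket{\bm{s}P;\bm{\lambda}}=Y_{j_1,j_1+1}(s_{P_1(j_1)}-s_{P_1(j_1+1)})\cdots Y_{j_n,j_n+1}(s_{P_n(j_n)}-s_{P_n(j_n+1)})\ket{\bm{s}\,\mrm{id};\bm{\lambda}} .
\]
The $Y$-operators act only on the spin labels and are independent of $\bm{\lambda}$. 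They are well defined because $s_j-s_k\neq 2iu$ rules out the zero of the denominator $\mu+2iu$. Hence they commute with the $\lambda$-integrals in \eqref{eq:A_P}, and
\[
A^{(\bm{s})}_P(\bm{a}\vert\bm{b})=\bra{\bm{a}}Y\cdots Y\ket{\bm{\chi}},\qquad \ket{\bm{\chi}}:=\sum_{\bm{c}}A^{(\bm{s})}_{\mrm{id}}(\bm{c}\vert\bm{b})\ket{\bm{c}} .
\]
The lemma therefore reduces to proving $A^{(\bm{s})}_{\mrm{id}}(\bm{c}\vert\bm{b})=\delta_{\bm{c},\bm{b}}$.

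\textbf{Step 2: the identity amplitude.} Let $\gamma_l$ denote the down-spin positions in $\bm{c}$. In the integrand each $\lambda_k$ appears with the measure factor
\[
\frac{1}{\lambda_k-s_{\beta_k}-iu}\prod_{l<\beta_k}\frac{\lambda_k-s_l+iu}{\lambda_k-s_l-iu}
\]
and with the plane-wave factor from \eqref{eq:spin_wave} at position $\gamma_{R^{-1}(k)}$. Inside $\Gamma_{\bm{s}}$ the only poles are at $\lambda_k=s_j+iu$. The magnon $S$-matrix poles at $\lambda_m-\lambda_n=2iu$ are not enclosed, by the choice of small contours together with $s_j-s_k\neq 2iu$.

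The plane-wave factor for position $\gamma$ is $2iu/(\lambda-s_\gamma+iu)\prod_{j<\gamma}(\lambda-s_j-iu)/(\lambda-s_j+iu)$. Its numerators $\lambda-s_j-iu$ for $j<\gamma$ cancel the measure poles at $s_j+iu$ with $j<\min(\beta_k,\gamma)$. The measure numerators $\lambda-s_l+iu$ cancel its poles at $s_l-iu$, which are not enclosed anyway. After these cancellations, the surviving enclosed poles are:
\begin{itemize}
\item the poles at $s_l+iu$ with $\gamma\le l<\beta_k$, when $\gamma<\beta_k$;
\item the simple pole at $s_{\beta_k}+iu$, with residue determined by the two products.
\end{itemize}
I would organize the computation as an induction on $M$, integrating the magnon with the largest $\beta_k$ first and tracking which $R$ survive.

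Two outcomes should then be checked:
\begin{itemize}
\item For $\gamma_{R^{-1}(k)}>\beta_k$ the integrand in $\lambda_k$ is regular inside $\Gamma_{\bm{s}}$, since the measure pole at $s_{\beta_k}+iu$ is cancelled, so the term vanishes.
\item For $\gamma=\beta_k$ the residue equals $1$: the product factors cancel exactly and $2iu/(2iu)=1$.
\end{itemize}
The case $\gamma<\beta_k$ is the main obstacle. There the residues at the intermediate points $s_l+iu$ do not vanish term by term. One must show that, after summing over $R\in S_M$ weighted by the magnon $S$-matrix factors, these contributions cancel. This is the analogue of the Tracy--Widom argument for ASEP and XXZ Yudson formulas.

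My first attempt would be to order the magnons and show that only the ordering $\gamma_{R^{-1}(k)}\le\beta_k$ for all $k$ survives. Combined with $\gamma_1<\cdots<\gamma_M$ and $\beta_1<\cdots<\beta_M$, this forces $R=\mrm{id}$ and $\gamma=\beta$. Failing that, I would use the symmetrization identity for the sum over $R$ of products of $S$-factors. This should make the integrand antisymmetric under exchanging $\lambda$'s with colliding residue positions, so that the unwanted residues cancel in pairs.

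Combining the two steps, $\ket{\bm{\chi}}=\ket{\bm{b}}$, and \eqref{eq:A_P_1} follows.
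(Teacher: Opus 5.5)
Your Step~1 is correct and is the paper's own reduction. By \eqref{eq:sca}, the $\lambda$-independent $Y$-operators come out of the $\lambda$-integrals, so the lemma is equivalent to $A^{(\bm{s})}_{\mathrm{id}}(\bm{a}\vert\bm{b})=\delta_{\bm{a},\bm{b}}$, which is Proposition~\ref{prop}.

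The gap is in Step~2, which carries all the content and is not proved. You isolate the case $\gamma_{R^{-1}(k)}<\beta_k$ as the obstacle and then offer two hopes. The first is logically broken. The conditions $\gamma_{R^{-1}(k)}\le\beta_k$ for all $k$ are exactly those under which a term does not vanish trivially. Together with monotonicity they do not force $R=\mathrm{id}$ and $\gamma=\beta$:
\begin{itemize}
\item for $M=1$, $\gamma_1=1<\beta_1=2$ is allowed;
\item for $M=2$, $\gamma=(1,2)$, $\beta=(3,4)$ with $R$ the transposition is allowed.
\end{itemize}
What must be shown is that all terms with a strict inequality cancel, and that is the statement itself.

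Your account of the mechanism is also wrong. At $M=1$ there is no sum over $R$, yet the enclosed residues at $s_l+iu$, $\gamma_1\le l\le\beta_1$, still cancel among themselves. After the cancellations you describe, every finite pole of the integrand lies inside $\Gamma_{\bm s}$, and the integrand decays like $\lambda^{-2}$. By the exterior form \eqref{eq:prop_formula} of the residue theorem, the integral is therefore minus the residue at infinity, which is $0$.

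What you are missing for general $M$ is using this exterior-residue trick to bring the magnon $S$-matrix poles into play. Below, $\alpha_l$, the down-spin positions of $\bm a$, play the role of your $\gamma_l$. The paper splits $S_M=\bigsqcup_n S^{(n)}_M$ according to $n=R(1)$.
\begin{itemize}
\item For $R(1)=1$, $\lambda_1$ appears in no $S$-factor. Its integral is the $M=1$ computation, giving $\delta_{\alpha_1,\beta_1}$, and induction on $M$ handles the rest.
\item For $R(1)=n>1$ with $\beta_n\le\alpha_1$, the $\lambda_1$-integrand is regular inside its contour, so the term vanishes.
\item For $R(1)=n>1$ with $\alpha_1<\beta_n$, the $\lambda_n$-integrand has no poles at $s_j-iu$. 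Its integral is therefore minus the sum of residues at the exterior $S$-matrix poles $\lambda_n=\lambda_m-2iu$, $m<n$. This produces factors $(\lambda_l-\lambda_m+4iu)/(\lambda_l-\lambda_m)$. After first deforming the contours into mutually distinct curves $\Gamma^{(k)}_{\bm s}$, the subsequent $\lambda_m$-integral picks up only the poles $\lambda_m=\lambda_l$. These contributions cancel pairwise between $R$ and $\Pi_{l,m}R$.
\end{itemize}

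Your ``antisymmetry'' route gestures at such a pairing but does not say what is antisymmetric. As you note, the $S$-matrix poles are not enclosed by $\Gamma_{\bm s}$, so no such pairing can appear before this conversion to exterior residues. Citing Tracy--Widom does not close the gap either. Their argument relies on a change of integration variables that, as the paper remarks, does not appear to be available once the inhomogeneities $s_j$ are distinct.
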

\noindent
Note that the expression in \eqref{eq:A_P_1} no longer contains the $\lambda$-integrals, 
allowing for a straightforward evaluation of the $z$-integrals in \eqref{eq:charge}. 
Moreover, as explained in subsection~\ref{sub:charge}, 
this expression admits a graphical representation, providing an intuitive understanding 
of the pole structure of $A^{(\bm{s})}_P(\bm{a}\vert \bm{b})$. 
The proof of Lemma~\ref{lem1} proceeds by explicitly evaluating the $\lambda$-integrals 
for $A^{(\bm{s})}_{\mrm{id}}(\bm{a}\vert \bm{b})$ 
and establishing $\eqref{eq:A_P_1}$ for the identity permutation
(see Proposition~\ref{prop} below), together with the relation~\eqref{eq:sca} 
for the scattering amplitude~\eqref{eq:spin_wave}.

In subsection~\ref{sub:charge}, using Lemma~\ref{lem1}, we prove the following lemma, thereby completing the proof of Step (b).
\begin{lem}\label{lem2}
The following identity holds:
\eq{
\sum_{P\in S_N}\mrm{sgn}(P) \prod_{j=1}^N \Big[ \oint_{|z_j|=r^{N-j} } \frac{dz_j}{2\pi i z_j}z_j^{-y_j} \Big]
A^{(\bm{s})}_P(\bm{a}\vert \bm{b}) \prod_{j=1}^N  z_{P(j)}^{x_j} = \det[\delta_{x_j,y_k}\delta_{a_j,b_k}]_{j,k=1}^N,
\label{eq:identity_c_1}
}
where $A^{(\bm{s})}_P(\bm{a}\vert \bm{b})$ is given in \eqref{eq:A_P_1}, and
we set $r\ll1 $ so that the poles of $1/(s_j - s_k +2iu)$ do not lie inside the $z_j$-contour if $j<k$.
The function $s_j$ is defined in \eqref{eq:s_j}.
\end{lem}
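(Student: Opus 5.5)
We prove Lemma~\ref{lem2}, which is a Tracy--Widom type statement dressed with the $Y$-operators of Lemma~\ref{lem1}. The plan is to expand the product of $Y$-operators, isolate the identity permutation, and show that every other permutation contributes zero. Two facts drive everything:
\[
Y_{n,n+1}(\mu)=\frac{\mu\,\Pi_{n,n+1}+2iu}{\mu+2iu}
\]
is a rational function of $z_j,z_k$ through $s_j-s_k$ only, and its only singularity is the pole at $s_{j}-s_{k}=-2iu$.

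\textbf{Step 1: the identity permutation.} For $P=\mrm{id}$ there are no $Y$-factors, so $A^{(\bm{s})}_{\mrm{id}}(\bm{a}\vert\bm{b})=\braket{\bm{a}\vert\bm{b}}=\prod_j\delta_{a_j,b_j}$. The $z$-integrals then factorize into $\prod_j\oint \frac{dz_j}{2\pi i z_j} z_j^{x_j-y_j}=\prod_j\delta_{x_j,y_j}$.

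Because $x_1\le\dots\le x_N$, $y_1\le\dots\le y_N$, and the pairs $(x_j,a_j)$ (resp.\ $(y_j,b_j)$) are distinct, the determinant on the right-hand side reduces to the diagonal term $\prod_j\delta_{x_j,y_j}\delta_{a_j,b_j}$. Other permutations of the determinant could only survive when ties $x_j=x_{j+1}$ occur with opposite spins. We must check these carefully, since the $Y$-operators contain the exchange $\Pi$ and ties are exactly where they may survive.

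\textbf{Step 2: vanishing for $P\neq\mrm{id}$.} The key is to integrate in a good order. Each $Y$-factor involving the pair $(j,k)$ has its pole only where $s_j-s_k=-2iu$, and the contour choice $|z_j|=r^{N-j}$ keeps the poles of $1/(s_j-s_k+2iu)$ outside the $z_j$-contour whenever $j<k$.

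Take $j$ to be the index maximizing some criterion, for instance the smallest $j$ with $P(j)\ne j$, or the variable $z_{m}$ with $m=\max\{k: P^{-1}(k)\neq k\}$. Consider the $z_m$ integral and its behaviour at $z_m=0$ and $z_m=\infty$: the exponent of $z_m$ is $x_{P^{-1}(m)}-y_m$, and the $Y$-factors are bounded at $z_m\to0$ (they tend to constants or to $\Pi$, since $s_m\to\infty$). Using $x_{P^{-1}(m)}\ge \ldots$ together with the ordering, I would show the integrand has no pole inside the $z_m$-contour (when the exponent is nonnegative) or can be deformed to infinity, giving zero, as in Tracy--Widom's ASEP argument.

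When the naive argument fails because the $Y$ poles obstruct the deformation, I would use the expression of Lemma~\ref{lem1} to choose a reduced decomposition in which the last transposition acts on adjacent positions $n,n+1$ with $x_n=x_{n+1}$ in the borderline case. I would then pair $P$ with $P\Pi_{n,n+1}$. Their combined contribution involves $\mrm{sgn}$ times $(1-Y)$-type combinations, and these should cancel or reduce to the off-diagonal determinant terms with $\delta_{a_n,b_{n+1}}\delta_{a_{n+1},b_n}$. This uses $Y(\mu)$ at $\mu\to\infty$ equal to $\Pi$ and at $\mu=0$ equal to $1$.

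\textbf{Main obstacle.} The hard part is organizing the residue bookkeeping for general $P$: the $Y$-products generate poles at $s_j-s_k=-2iu$ whose locations in $z$ depend on the other variables. I would proceed by induction on $N$, integrating out $z_1$ (the smallest contour) first. Because $r\ll1$, the only possible pole inside $|z_1|=r^{N-1}$ is at $z_1=0$. The order of that pole is governed by $x_{P^{-1}(1)}-y_1$ and by the $z_1\to0$ asymptotics $s_1\sim -1/(2iz_1)$, under which every $Y$ with argument involving $s_1$ degenerates to $\Pi$ or $1$ up to $O(z_1)$. The result should reduce the $N$-particle identity to the $(N-1)$-particle one with the particle of $P^{-1}(1)$ removed. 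Controlling the $O(z_1)$ corrections when $x_{P^{-1}(1)}>y_1$ is where I expect the real difficulty; I would try first to show these corrections vanish by the ordering $y_1\le x_{P^{-1}(1)}$ and the antisymmetry assumption.
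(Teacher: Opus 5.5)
\textbf{There is a genuine gap at the center of your plan.} The claim in Step~2, that every $P\neq\mathrm{id}$ contributes zero apart from tie corrections, is false.

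Take $N=3$, $M=1$, $\bm{x}=(0,1,2)$, $\bm{y}=(6,7,8)$, $\bm{a}=(\downarrow,\uparrow,\uparrow)$, $\bm{b}=(\uparrow,\uparrow,\downarrow)$, so no coordinates coincide. Let $P(1)=3$, $P(2)=1$, $P(3)=2$. Then $A^{(\bm{s})}_P(\bm{a}\vert\bm{b})=\bra{\bm{a}}Y_{1,2}(s_1-s_3)Y_{2,3}(s_2-s_3)\ket{\bm{b}}=g(s_1)g(s_2)$ with $g(s)=(s-s_3)/(s-s_3+2iu)$. The $z_1$- and $z_2$-integrals factorize, and one finds $I_c(P)=16u^2$, while $I_c(\Pi_{1,2}P)=-16u^2$. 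These nonzero values come from the poles of the $Y$-factors at $s_j-s_k=-2iu$. Those are exactly the poles your ``deform to infinity'' step would have to cross, and they have nothing to do with ties.

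What the paper proves is only that $\sum_{P(1)=n}I_c(P)=0$ for $n\ge3$, and it does so by a nested residue computation that your proposal lacks:
\begin{enumerate}[(i)]
\item After disposing of the case $x_2>y_1$ by analyticity in $z_1$, substitute $z_n\to-1/z_n$. This leaves $s_n$ invariant and removes the pole at the origin, since $y_n-x_1-1\ge0$. Then pick up the residues at the roots $z_n=\rho(z_m)\approx z_m$ of $s_n=s_m+2iu$, $m<n$.
\item In the subsequent $z_m$-integral, the apparent poles at $s_m=s_l+2iu$ coming from $Y(s_l-s_m)$ cancel. This uses a decomposition (chosen via Yang--Baxter) that exhibits $Y_{j,j+1}(s_l-s_m)Y_{j-1,j}(s_l-s_m-2iu)[-2iu(1-\Pi_{j,j+1})]$, together with $(1-\Pi_{j,j+1})(2\Pi_{j-1,j}-1)(1-\Pi_{j,j+1})=0$ on $(\mathbb{C}^2)^{\otimes 3}$. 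So only the poles at $s_m=s_l$ survive.
\item The resulting double residue is invariant under $P\mapsto\Pi_{l,m}P$, because $Y(0)=I$ and $Y(\mu)Y(-\mu)=I$. Hence $T_c(l,m;P)+T_c(l,m;\Pi_{l,m}P)=0$.
\end{enumerate}
This cancellation is not an involution on permutations. Once $P(1)\ge4$, a single $I_c(P)$ splits into several pieces $T_c(l,m;P)$, each cancelled by a different partner $\Pi_{l,m}P$. Your proposed pairing of $P$ with $P\Pi_{n,n+1}$ at tied positions therefore cannot replace it.

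\textbf{Your fallback of integrating $z_1$ first has the difficulty backwards.} Inside $|z_1|=r^{N-1}$ the integrand is $z_1^{x_{P^{-1}(1)}-y_1-1}A^{(\bm{s})}_P(\bm{a}\vert\bm{b})$, with $A^{(\bm{s})}_P$ analytic there. So the case $x_{P^{-1}(1)}>y_1$ is trivial. The hard case is $y_1>x_{P^{-1}(1)}$: the pole at $z_1=0$ then has order at least two, and higher Taylor coefficients of the $Y$-factors enter. The example above is of this kind, and no ordering $y_1\le x_{P^{-1}(1)}$ is available to appeal to.

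The paper integrates $z_1$ first only in two cases:
\begin{enumerate}[(i)]
\item when $P(1)=1$, where $A^{(\bm{s})}_P$ does not depend on $z_1$;
\item when $P(1)=2$, after the substitution $z_2\to z_2/z_1$. This makes the pole at $z_1=0$ at most simple, with residue read off from $Y_{1,2}\to\Pi_{1,2}$.
\end{enumerate}
These give the $(N-1)$- and $(N-2)$-particle reductions that reproduce the only two surviving terms of the determinant expansion, so the induction needs both hypotheses.

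Your remarks on the identity term, on tie terms arising from $Y\to\Pi$, and on analyticity for nonnegative exponents are correct. But they cover only $P(1)\in\{1,2\}$. The vanishing for $P(1)\ge3$, which is the substance of the lemma, is missing.
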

\noindent
The proof of Lemma~\ref{lem2} follows from a similar argument to that of Proposition~\ref{prop}, 
combined with the representation~\eqref{eq:A_P_1} obtained in Lemma~\ref{lem1}.

\begin{remark}
Note that the representation of $P$ as a product of adjacent transpositions is not unique.
Hence one must verify that the value of $A^{(\bm{s})}_P(\bm{a}\vert \bm{b})$ in \eqref{eq:A_P_1} does not depend on the specific decomposition.
However this consistency problem is guaranteed by the (braid) Yang-Baxter relation,
\eq{
Y_{j,k}(\mu-\nu) Y_{k,l}(\lambda-\nu) Y_{j,k}(\lambda-\mu) = Y_{k,l}(\lambda-\mu)Y_{j,k}(\lambda-\nu)Y_{k,l}(\mu-\nu)
\label{eq:yang-baxter}
}
together with the unitarity relation,
\eq{
Y_{j,k}^{-1}(\lambda) = Y_{j,k}(-\lambda).
}
We refer the reader to appendix~C in chapter~3 of \cite{Essler_hubbard} for further properties of the $Y$-operator.
\end{remark}

\subsection{Proof of Lemma~\ref{lem1}}\label{sub:spin}
We prove Lemma~\ref{lem1} by exploiting the relation \eqref{eq:sca} and by explicitly evaluating the amplitude 
$A^{(\bm{s})}_{P}(\bm{a}\vert \bm{b})$ for the identity permutation $P=\mrm{id}$.
As mentioned earlier, throughout this subsection, we treat $\bm{s}$ as independent complex parameters satisfying 
$s_j -s_k\neq 2iu$ for any $j\neq k$ although $s_j$ is originally defined by $s_j = (z_j-1/z_j)/(2i)$.

For a permutation $P = \Pi_{j_n,j_{n}+1}\cdots \Pi_{j_1,j_1+1}$,
the scattering amplitude $\langle \bm{a}\ket{\bm{s}P;\bm{\lambda}}$ in
\eqref{eq:spin_wave} can be rewritten as
\eqnn{
\langle \bm{a}\ket{\bm{s}P;\bm{\lambda}} = \sum_{\bm{a}'\in \{\uparrow,\downarrow\}^N }
\bra{\bm{a}} Y_{j_1,j_1+1}(s_{P_1(j_1)} - s_{P_1(j_1+1)}) \cdots Y_{j_n,j_{n}+1} (s_{P_n(j_n)} - s_{P_n(j_n+1)})\ket{\bm{a}'}
\langle \bm{a}'\ket{\bm{s};\bm{\lambda}},
}
where we used \eqref{eq:sca} and the notation introduced in
Lemma~\ref{lem1}.
Substituting this expression into \eqref{eq:A_P}, the amplitude
$A^{(\bm{s})}_P(\bm{a}\vert \bm{b})$ can be written as
\eqnn{
A^{(\bm{s})}_P(\bm{a}\vert \bm{b}) =
\sum_{\bm{a}'\in \{\uparrow,\downarrow\}^N}
A^{(\bm{s})}_{\mrm{id}}(\bm{a}'\vert \bm{b})
\bra{\bm{a}} Y_{j_1,j_1+1}(s_{P_1(j_1)} - s_{P_1(j_1+1)}) \cdots Y_{j_n,j_{n}+1} (s_{P_n(j_n)} - s_{P_n(j_n+1)})\ket{\bm{a}'}.
}
Thus it suffices to show the following proposition.

\begin{prop}\label{prop}
The following identity holds for $A^{(\bm{s})}_{\mrm{id}}(\bm{a}\vert \bm{b})$:
\eq{
A^{(\bm{s})}_{\mrm{id}}(\bm{a}\vert \bm{b}) = \delta_{\bm{a},\bm{b}}.
\label{eq:identity_lem1}
}
\end{prop}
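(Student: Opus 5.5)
We plan to evaluate the $\lambda$-integrals in \eqref{eq:A_P} for $P=\mrm{id}$ by residues, one variable at a time, using the inequality $\alpha\le\beta$ between down-spin positions as a triangularity mechanism.

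\textbf{Single-variable integrand.} Fix $R\in S_M$ in \eqref{eq:spin_wave} and look at one variable $\lambda_k$. It is attached to the down-spin position $\alpha:=\alpha_{R^{-1}(k)}$ of $\bm a$, and its weight in \eqref{eq:A_P} carries $\beta:=\beta_k$. Apart from the two-body factor $\prod[(\lambda_m-\lambda_n+2iu)/(\lambda_m-\lambda_n-2iu)]$, the $\lambda_k$-dependence is
\begin{equation*}
f_{\alpha\beta}(\lambda_k)=\frac{2iu}{(\lambda_k-s_\beta-iu)(\lambda_k-s_\alpha+iu)}\prod_{l=1}^{\beta-1}\frac{\lambda_k-s_l+iu}{\lambda_k-s_l-iu}\prod_{j=1}^{\alpha-1}\frac{\lambda_k-s_j-iu}{\lambda_k-s_j+iu}.
\end{equation*}
The cancellations give three cases.
\begin{itemize}
\item If $\alpha>\beta$, every candidate pole $s_l+iu$ with $l\le\beta$ is cancelled. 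Hence $f_{\alpha\beta}$ has no pole inside $\Gamma_{\bm s}$.
\item If $\alpha=\beta$, everything cancels except $2iu/[(\lambda_k-s_\beta-iu)(\lambda_k-s_\beta+iu)]$, whose residue at $s_\beta+iu$ equals $1$.
\item If $\alpha<\beta$, we get $f_{\alpha\beta}=2iu\,[(\lambda_k-s_\alpha-iu)(\lambda_k-s_\beta-iu)]^{-1}\prod_{\alpha<l<\beta}(\lambda_k-s_l+iu)/(\lambda_k-s_l-iu)$. All of its poles are of the form $s_l+iu$, and it decays like $\lambda_k^{-2}$ at infinity. So the integral over all of $\Gamma_{\bm s}$ is the sum of all residues, which vanishes.
\end{itemize}
This already proves the case $M=1$, with result $\delta_{\alpha_1,\beta_1}=\delta_{\bm a,\bm b}$.

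\textbf{Analyticity of the two-body factor.} For general $M$ we first check that the two-body factor is harmless inside $\Gamma_{\bm s}$. When all $\lambda$'s lie on small circles around points $s_j+iu$, its poles in $\lambda_m$ sit at $\lambda_n\pm2iu$, that is, near $s_k+3iu$ or $s_k-iu$. A coincidence with an enclosed point $s_j+iu$ would require $s_j-s_k=\pm2iu$, which is excluded by hypothesis. So we may shrink the curves so that, for each $R$, the two-body factor is analytic and nonvanishing inside. Consequently only poles of the product $\prod_k f_{\alpha_{R^{-1}(k)}\beta_k}(\lambda_k)$ contribute. Since both sequences $\alpha_l$ and $\beta_k$ are increasing, the terms with $R=\mrm{id}$ and $\alpha_l=\beta_l$ for all $l$ give residue $1$ times the two-body factor evaluated at $\lambda_k=s_{\beta_k}+iu$. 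For $R=\mrm{id}$ the order condition $R^{-1}(n)<R^{-1}(m)$ with $m<n$ is never met, so this factor is identically $1$ and the contribution is exactly $1$. Every other $R$ must pair some $\lambda_k$ with $\alpha>\beta_k$ in order to avoid the $\alpha=\beta$ structure; that term then has no pole in $\lambda_k$ and vanishes.

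\textbf{Remaining terms, by induction on $M$.} The genuine difficulty, which we expect to be the main obstacle, is the terms with $\alpha_{R^{-1}(k)}<\beta_k$ for some $k$. There the one-variable ``residue at infinity'' argument is spoiled by the two-body factor. Our plan is induction on $M$, integrating first over the variable $\lambda_k$ with the largest $\beta_k$, namely $k=M$.
\begin{itemize}
\item We deform its contour to infinity. The integrand still decays like $\lambda_M^{-2}$, so $\oint_{\Gamma_{\bm s}}$ equals minus the residues at the outside poles $\lambda_M=\lambda_n\pm2iu$, since the $s_l-iu$ poles have been cancelled.
\item We then sum over $R$ before integrating further. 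Pairs $R$ and $R\circ\Pi$ differing by a transposition involving the position of $M$ carry two-body factors related by $(\lambda_m-\lambda_n+2iu)/(\lambda_m-\lambda_n-2iu)$. We expect the residues at $\lambda_M=\lambda_n\pm2iu$ to cancel pairwise between such $R$'s, by the Yang--Baxter relation \eqref{eq:yang-baxter} and unitarity.
\item What survives should be the $\alpha_M=\beta_M$ contribution, which equals $1$ times the amplitude for $M-1$ down spins on the shorter configuration. The induction hypothesis then gives $\delta_{\bm a,\bm b}$.
\end{itemize}
If this pairwise cancellation proves unwieldy, the fallback is to use appendix~\ref{app:yang}. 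There we would rewrite $\braket{\bm a\vert\bm s;\bm\lambda}$ through the algebraic Bethe ansatz $B(\lambda_1)\cdots B(\lambda_M)\ket{\Uparrow}$, and observe that the weights in \eqref{eq:A_P} are the dual eigenvalue factors. The $\lambda$-integrals then act as a projection onto the vector built from $\bm b$, which would give $\delta_{\bm a,\bm b}$ directly.
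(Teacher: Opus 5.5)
\textbf{Where the proposal is sound.} Several parts of your plan work. The single-variable analysis and the case $M=1$ are correct. So is the observation that the two-body factors are analytic inside $\Gamma_{\bm s}$, which means any $R$ with some $\alpha_{R^{-1}(k)}>\beta_k$ contributes zero. Your first step for the remaining terms is also sound: when $\alpha_{R^{-1}(M)}<\beta_M$, evaluate the $\lambda_M$-integral by the outside poles $\lambda_M=\lambda_m-2iu$ (only these occur, with $R^{-1}(m)>R^{-1}(M)$). This mirrors the paper's step, which integrates first the variable $\lambda_n$, $n=R(1)$, attached to the first down spin $\alpha_1$.

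\textbf{The gap.} The gap is everything after that step. The cancellation you ``expect'' does not take place among these residues, and the mechanism you name cannot produce it. Take $M=2$, $(\alpha_1,\alpha_2)=(1,2)$, $(\beta_1,\beta_2)=(3,4)$.
\begin{itemize}
\item $I_s(\mathrm{id})=0$ trivially, since it factorizes. The only nontrivial term is $R=\Pi_{1,2}$.
\item For $R=\Pi_{1,2}$, the $\lambda_2$-integral is minus the residue at $\lambda_2=\lambda_1-2iu$. It leaves the $\lambda_1$-integrand $4iu\,f_{2,3}(\lambda_1)\,f_{1,4}(\lambda_1-2iu)=-16iu^3/\prod_{j=1}^{4}(\lambda_1-s_j-3iu)$, which is not zero.
\item Its only conceivable partner, $R=\mathrm{id}$, has no two-body factor and hence no such pole. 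So there is nothing for it to cancel against.
\item The Yang--Baxter relation \eqref{eq:yang-baxter} concerns the matrix-valued $Y$-operators and plays no role for the scalar factors in $I_s(R)$.
\end{itemize}
This term vanishes only after the $\lambda_1$-integration. Evaluating $f_{\alpha_1,\beta_2}$ at $\lambda_1-2iu$ produces numerator factors $(\lambda_1-s_j-iu)$ with $\alpha_1<j<\beta_2$. These cancel every pole of $f_{\alpha_2,\beta_1}(\lambda_1)$ inside $\Gamma_{\bm s}$, precisely because $\alpha_1<\alpha_2$ and $\beta_1<\beta_2$.

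\textbf{What the missing step requires.} This second integration is the heart of the proof and is absent from your plan. For general $M$, the argument goes as follows.
\begin{itemize}
\item After the residue at $\lambda_M=\lambda_m-2iu$, the factors $(\lambda_l-\lambda_M+2iu)/(\lambda_l-\lambda_M-2iu)$ with $l\neq m$ and $R^{-1}(l)>R^{-1}(M)$ become $(\lambda_l-\lambda_m+4iu)/(\lambda_l-\lambda_m)$.
\item The inequalities $\alpha_{R^{-1}(m)}>\alpha_{R^{-1}(M)}$ and $\beta_m<\beta_M$, together with $s_j-s_k\neq 2iu$, show that the collision points $\lambda_m=\lambda_l$ are the only poles of the $\lambda_m$-integrand inside its contour.
\item For these poles to be simple and well defined, you must first separate the contours into mutually distinct $\Gamma^{(k)}_{\bm s}$, as the paper does.
\item The collision residues then cancel between $R$ and $\Pi_{l,m}R$. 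This swaps the two colliding variables, neither of which is $\lambda_M$, so $R^{-1}(M)$ is unchanged. The cancellation holds because $(\lambda_l-\lambda_m+2iu)/(\lambda_l-\lambda_m-2iu)=-1$ at $\lambda_l=\lambda_m$, while all other factors are symmetric there.
\end{itemize}

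You also need to organize the sum by $p=R^{-1}(M)$.
\begin{itemize}
\item The class $p=M$ factorizes into $\delta_{\alpha_M,\beta_M}$ times the $(M-1)$ identity. This is where the induction enters; it is not a leftover after cancellations.
\item If $p<M$ and $\alpha_p\ge\beta_M$, the variable $\lambda_{R(M)}$ has $\alpha_M>\beta_{R(M)}$, so the term vanishes.
\end{itemize}
Your algebraic-Bethe-ansatz fallback only restates the identity as a completeness relation and supplies no mechanism.
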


\begin{remark}
The identity~\eqref{eq:identity_lem1} involves the wave function of the inhomogeneous XXX spin chain; 
see \eqref{eq:A_P} for the definition of $A^{(\bm{s})}_P(\bm{a}\vert \bm{b})$.
For the homogeneous case where $s_j = 0$ for all $j$, 
the identity~\eqref{eq:identity_lem1} recovers the known result 
in the context of the symmetric simple exclusion process, 
whose Markov generator is equivalent to the homogeneous XXX Hamiltonian.
Indeed, in this case, the contour $\Gamma_{\bm{s}}$ becomes a sufficiently small circle enclosing $iu$ 
as shown in figure~\ref{fig:contour}~(b), and the identity \eqref{eq:identity_lem1} reads
\eqnn{
&\prod_{n=1}^M \left[\oint_{|\lambda_n - iu| \ll 1} \frac{d\lambda_n}{2\pi i}
\frac{2iu}{ (\lambda_n -iu)( \lambda_{n}  + iu) }
\right]
\\
&\times
\sum_{R\in S_M}
 \prod_{\substack{1\leq m <n \leq M \\  R^{-1}(n) < R^{-1}(m)}} \frac{\lambda_{m}-\lambda_n +2iu}{\lambda_m-\lambda_n-2iu} 
\prod_{n=1}^M  
\left[
\prod_{\alpha=1}^{\alpha_{R^{-1}(n)}-1} \frac{\lambda_{n}-iu}{\lambda_{n} +iu}\prod_{\beta=1}^{\beta_n-1} \frac{\lambda_n +iu}{\lambda_n  -iu}
\right]
=\prod_{j=1}^M\delta_{\alpha_j,\beta_j}.
}
Note that $\alpha_j$ and $\beta_j$ denote the positions of the $j$-th down-spin in the spin configurations $\bm{a}$ and $\bm{b}$, respectively.
Upon changing the integration variables to $\xi_j = \frac{\lambda_j-iu}{\lambda_j+iu}$ ($j=1,\dots,M$),
the above equation can be rewritten as
\eq{
 \prod_{n=1}^M \Big[ \oint_{|\xi_n|\ll 1} \frac{d\xi_n}{2\pi i \xi_n}\Big] \sum_{R\in S_M}
\prod_{\substack{1\leq m<n \leq M \\ R^{-1}(n)< R^{-1}(m)}} \Big[-\frac{1+\xi_m \xi_n -2\xi_n}{1 + \xi_m \xi_n -2\xi_m}\Big]
\prod_{n=1}^M \xi_{n}^{\alpha_{R^{-1}(n)}-\beta_n}= \prod_{j=1}^M \delta_{\alpha_j,\beta_j}.
\label{eq:tw}
} 
This identity is equivalent to the one derived by Tracy and Widom~\cite{Tracy_integral} for the asymmetric simple exclusion process, specifically for the symmetric case ($p=q$).
Thus, Proposition~\ref{prop} extends the known result~\eqref{eq:tw} to general inhomogeneities $\{s_j\}_{j=1}^N$.
\end{remark}

\noindent
\textbf{Proof of Proposition~\ref{prop}.}~
Proposition~\ref{prop} is equivalent to the following identity,
\eq{
\sum_{R\in S_M} I_s(R) = \prod_{j=1}^M \delta_{\alpha_j,\beta_j},
\label{eq:identity}
}
where
\eq{
I_s(R) &:= \prod_{n=1}^M \left[\oint_{\Gamma_{\bm{s}}} \frac{d\lambda_n}{2\pi i}
\frac{2iu}{ (\lambda_n -s_{\beta_n}-iu)( \lambda_{n} -s_{\alpha_{R^{-1}(n)}} + iu) }
\right]
 \prod_{\substack{1\leq m <n \leq M \\  R^{-1}(n) < R^{-1}(m)}} \frac{\lambda_{m}-\lambda_n +2iu}{\lambda_m-\lambda_n-2iu} 
\\
&\quad \times 
\prod_{n=1}^M  
\left[
\prod_{\alpha=1}^{\alpha_{R^{-1}(n)}-1} 
\frac{\lambda_{n} -s_{\alpha}-iu}{\lambda_{n} -s_{\alpha} +iu}
\prod_{\beta=1}^{\beta_n-1} \frac{\lambda_n-s_{\beta} +iu}{\lambda_n - s_\beta -iu}
\right].\label{eq:I_s}
}
We shall prove this identity~\eqref{eq:identity} by residue calculus and mathematical induction on $M$.

In the proof, we slightly deform the $\lambda$-contours in \eqref{eq:I_s} so that they become mutually distinct, denoting them by 
$\Gamma^{(k)}_{\bm{s}}$ ($k=1,\dots,M$).
This deformation guarantees that the factor in \eqref{eq:I_s},
\eq{
\prod_{\substack{1\leq m<n\leq M \\ R^{-1}(n) < R^{-1}(m)}} \frac{\lambda_m-\lambda_n +2iu}{\lambda_m-\lambda_n-2iu},
}
does not exhibit multiple poles with respect to any $\lambda_n$, thereby simplifying the residue calculations.
We remark that a similar deformation was employed in the proof of the identity by Tracy and Widom~\cite{Tracy_integral,Tracy_erratum}
(specifically in the erratum).
However, our proof strategy differs from theirs, which is based on a change of integration variables, 
since in the presence of inhomogeneities such a transformation does not appear to be available.

First, we consider the case $M=1$.
In this case, one has
\eq{
I_{s}(\mrm{id})
= \oint_{\Gamma^{(1)}_{\bm{s}}} \frac{d\lambda_1 }{2\pi i } \frac{2iu}{(\lambda_1 -s_{\beta_1} -iu)(\lambda_1-s_{\alpha_1}+iu)} 
\prod_{\alpha=1}^{\alpha_1-1}\frac{\lambda_1-s_\alpha-iu}{\lambda_1-s_\alpha+iu}\prod_{\beta=1}^{\beta_1-1} 
\frac{\lambda_1-s_\beta+iu}{\lambda_1-s_\beta -iu}
}
This integral is equal to zero when $\beta_1<\alpha_1$, since the integrand does not have poles at $\lambda_1=s_k+iu$ ($k=1,\dots,N$).
When $\beta_1=\alpha_1$, a simple calculation leads to $A_{\mrm{id}}(a_1\vert b_1)=1$. 
The above integral when $\alpha_1<\beta_1$ can be rewritten as
\eq{
\oint_{\Gamma^{(1)}_{\bm{s}}} \frac{d\lambda_1 }{2\pi i } \frac{2iu}{(\lambda_1 -s_{\beta_1} -iu)(\lambda_1-s_{\alpha_1}-iu)} 
\prod_{\beta=\alpha_1+1}^{\beta_1-1} \frac{\lambda_1-s_\beta+iu}{\lambda_1-s_\beta -iu}.\label{eq:M=1}
}
To evaluate this integral, we use the following form of the residue theorem.
For a rational function $f(z)$ and a counterclockwise contour $\mathcal{C}$,
the contour integral can be written as the negative sum of the residues
outside $\mathcal{C}$ on the extended complex plane,
\eq{
\oint_{\mathcal{C}} \frac{dz}{2\pi i} f(z)
= - \sum_{w \in S_{\mathrm{out}}} \Res_{z=w} f(z)- \Res_{z=\infty} f(z),
\label{eq:prop_formula}
}
where $S_{\mathrm{out}}$ denotes the set of poles of $f(z)$ lying outside
the contour $\mathcal{C}$.
From \eqref{eq:prop_formula}, it follows that the integral~\eqref{eq:M=1} is zero.
Thus we complete the proof of the identity \eqref{eq:identity} for $M=1$.

We now proceed to the case $M>1$, assuming that \eqref{eq:identity} holds for $M-1$.
For $n=1,\dots,M $, define the subset of $S_M$ as
\eq{
S^{(n)}_M := \{ R\in S_M | R(1) =n \}.
\label{eq:dec_sym}
}
The sets $S^{(n)}_M$ ($n=1,\dots,M$) are disjoint and their union gives the symmetric group,
\eq{
S_M = \bigsqcup_{n=1}^M S^{(n)}_M.
}
This decomposition allows us to evaluate the total sum over $S_M$ 
by separately considering contributions from each $S^{(n)}_M$.

Now consider $I_s(R)$ for $R\in S^{(1)}_M$. In this case, the $\lambda_1$-integral can be evaluated as in the case $M=1$.
After the $\lambda_1$-integration, we have
\eqnn{
I_s(R)  &= 
\delta_{\alpha_1,\beta_1}
\prod_{n=2}^{M} \left[\oint_{\Gamma^{(n)}_{\bm{s}}} 
\frac{d\lambda_n}{2\pi i}\frac{2iu}{ (\lambda_n -s_{\beta_n}-iu)( \lambda_{n} -s_{\alpha_{R^{-1}(n)}} + iu) }
\right]
\\
&\times
 \prod_{\substack{2\leq m <n \leq M \\  R^{-1}(n) < R^{-1}(m)}} \frac{\lambda_{m}-\lambda_n +2iu}{\lambda_m-\lambda_n -2iu} 
\times 
\prod_{n=2}^{M}  
\left[
\prod_{\alpha=1}^{\alpha_{R^{-1}(n)}-1} 
\frac{\lambda_{n} -s_{\alpha}-iu}{\lambda_{n} -s_{\alpha} +iu}
\prod_{\beta=1}^{\beta_n-1} \frac{\lambda_n-s_{\beta} +iu}{\lambda_n - s_\beta -iu}
\right].
}
Then, we obtain the following identity by taking the sum over all $R\in S^{(1)}_M$ and using the induction hypothesis,
\eq{
\sum_{R\in S^{(1)}_M }I_s(R ) = 
\prod_{j=1}^M \delta_{\alpha_j,\beta_j}.\label{eq:identity_1}
}

We next consider $I_s(R)$ for $R\in S^{(n)}_M$ ($n>1$).
In this case, we will show the following identity,
\eq{
\sum_{R\in S^{(n)}_M }I_s(R) = 0~~~~\text{for}~n>1.
\label{eq:identity_2}
}
When $ \beta_n \leq \alpha_1$, it follows that
\eq{
\beta_1 < \beta_n \leq \alpha_1 <\alpha_{R^{-1}(1)}.
}
Therefore, the integrand in \eqref{eq:I_s} is holomorphic inside the $\lambda_1$-contour, which 
proves \eqref{eq:identity_2}.

Henceforth, we assume $\alpha_1 <\beta_n$ and prove \eqref{eq:identity_2} in this case.
For $R\in S^{(n)}_M$, $I_s(R)$ can be expressed as
\eqnn{
I_s(R) &= \oint_{\Gamma^{(n)}_{\bm{s}}} \frac{d\lambda_n}{2\pi i } \frac{2iu}{(\lambda_n-s_{\beta_n}-iu)(\lambda_n-s_{\alpha_1}+iu)}
\prod_{m=1}^{n-1} \frac{\lambda_m -\lambda_n + 2 iu}{\lambda_m-\lambda_n - 2 iu} 
\prod_{\beta=\alpha_1}^{\beta_n-1}\frac{\lambda_n-s_\beta + iu}{\lambda_n-s_\beta-iu}
\\
&\times 
(\text{$\lambda_n$-independent part}).
}
Based on \eqref{eq:prop_formula}, 
we evaluate the $\lambda_n$-integral by picking up the residues at $\lambda_n = \lambda_m-2iu$ ($m=1,\dots,n-1$),
which leads to
\eqnn{
I_s(R) &=\sum_{m=1}^{n-1}
\prod_{\substack{k = 1\\ k \neq n }}^M \Big[\oint_{\Gamma^{(k)}_{\bm{s}}} \frac{d\lambda_k}{2\pi i}
\frac{2iu}{(\lambda_k-s_{\beta_k}-iu)(\lambda_k-s_{\alpha_{R^{-1}(k)}}+iu)} 
\Big]
\\
&\times
\frac{-8 u^2}{(\lambda_m-s_{\beta_n} - 3iu)(\lambda_m - s_{\alpha_1} -iu)} \prod_{\substack{l=1 \\ l\neq m}}^{n-1}
\frac{\lambda_l -\lambda_m + 4 iu}{\lambda_l -\lambda_m}
\prod_{\beta = \alpha_1}^{\beta_n-1} \frac{\lambda_m - s_\beta - iu}{\lambda_m - s_\beta - 3iu}
\\
&\times
\prod_{\substack{1<k<l\leq M \\ k,l\neq n,R^{-1}(k)>R^{-1}(l)}} \frac{\lambda_k -\lambda_l + 2iu}{\lambda_k-\lambda_l -2iu}
\times
\prod_{\substack{k=1 \\ k\neq n}}^M 
\Big[
\prod_{\alpha=1}^{\alpha_{R^{-1}(k)}-1} 
\frac{\lambda_{k} -s_{\alpha} - iu}{\lambda_{k} -s_{\alpha} + iu}
\prod_{\beta=1}^{\beta_k-1} \frac{\lambda_k -s_{\beta} + iu}{\lambda_k - s_\beta -iu}
\Big].
}

We next perform the integration with respect to $\lambda_m$. 
From the conditions $\alpha_1 < \alpha_{R^{-1}(m)}$ and $\beta_m< \beta_n$,
the integrand of the above expression has no singularities at $\lambda_m = s_j + iu$ ($j=1,\dots,N$). 
Furthermore, given our choice of the contour and the condition $s_j - s_k \neq 2iu$ for all $j, k$, 
no poles other than $\lambda_m = \lambda_l$ ($l=1, \dots, n-1; \, l \neq m$) can be located inside $\Gamma^{(m)}_{\bm{s}}$. 
Hence, the only possible singularities within the $\lambda_m$-contour are the points $\lambda_m = \lambda_l$.
Thus, after the $\lambda_m$-integration, one obtains
\eq{
I_s(R) = \sum_{m=1}^{n-1} \sum_{\substack{l=1 \\ l \neq m}}^{n-1} T_s (l,m;R)
}
where we define
\eq{
T_s(l,m;R) 
&:= 
\prod_{\substack{k=1 \\ k\neq n,m}}^M\Big[\oint_{\Gamma^{(k)}_{\bm{s}}} \frac{d\lambda_k}{2\pi i}
\frac{2iu}{(\lambda_k-s_{\beta_k}-iu)(\lambda_k-s_{\alpha_{R^{-1}(k)}}+iu)} 
\Big] 
\\
&\times
\frac{2iu\;  \chi[\lambda_l \in \mathrm{Int}~\Gamma_{\bm{s}}^{(m)} ] }{(\lambda_l - s_{\beta_m} -iu) (\lambda_l - s_{\alpha_{R^{-1}(m)}} + iu )}
\\
&\times
\Res_{\lambda_m = \lambda_l}
\Bigg[
\frac{-8 u^2}{(\lambda_m-s_{\beta_n} - 3iu)(\lambda_m - s_{\alpha_1} -iu)} \prod_{\substack{k=1 \\ k\neq m}}^{n-1}
\frac{\lambda_k -\lambda_m + 4 iu}{\lambda_k -\lambda_m}
\prod_{\beta = \alpha_1}^{\beta_n-1} \frac{\lambda_m - s_\beta - iu}{\lambda_m - s_\beta - 3iu}
\\
&\times
\prod_{\substack{1<j<k\leq M \\ j,k\neq n,R^{-1}(j)>R^{-1}(k)}} \frac{\lambda_j -\lambda_k + 2iu}{\lambda_j-\lambda_k -2iu}
\times
\prod_{\substack{ k = 1 \\ k\neq n} }^M
\Big[
\prod_{\alpha=1}^{\alpha_{R^{-1}(k)}-1} 
\frac{\lambda_{k} -s_{\alpha} - iu}{\lambda_{k} -s_{\alpha} + iu}
\prod_{\beta=1}^{\beta_k-1} \frac{\lambda_k -s_{\beta} + iu}{\lambda_k - s_\beta -iu}
\Big]
\Bigg]
\label{eq:T_s}
}
with the indicator function $\chi[\cdot]$ and the interior $\mathrm{Int}\; \Gamma^{(m)}_{\bm{s}}$ of the $\lambda_m$-contour.

Now we observe that
\eq{
T_s(l,m;R) + T_s(l,m;\Pi_{l,m} R) = 0.
\label{eq:T_spin}
}
This relation follows from the fact that, upon taking the residue at 
$\lambda_m = \lambda_l$, the first factor in the fourth line of 
\eqref{eq:T_s} for $\Pi_{l,m}R$ is the negative of that for $R$, 
whereas all the other factors coincide.
Therefore, using the above relation and the decomposition of $S_M^{(n)}$ by even and odd permutations as 
$S^{(n)}_M = S^{(n)}_{M,\mrm{even} }  \bigsqcup \Pi_{l,m} S^{(n)}_{M,\mrm{even}} $,
we obtain the desired identity~\eqref{eq:identity_2} as follows,
\eqnn{
\sum_{R\in S^{(n)}_M} I_s(R) &=\sum_{R\in S^{(n)}_M} \sum_{m=1}^{n-1} \sum_{\substack{l=1\\ l\neq m}}^{n-1} T_s(l,m;R)
\\
&= \sum_{m=1}^{n-1} \sum_{\substack{l=1\\ l\neq m}}^{n-1}\sum_{R\in S^{(n)}_{M,\mrm{even}}}[ T_s(l,m;R) + T_s(l,m;\Pi_{l,m} R)] 
\\
&= 0.
}

Finally, combining \eqref{eq:identity_1} and \eqref{eq:identity_2}, 
we obtain the identity \eqref{eq:identity}, which completes the proof of Proposition~\ref{prop} and thereby establishes Lemma~\ref{lem1}.


\subsection{Proof of Lemma~\ref{lem2}}\label{sub:charge}
We complete the proof of Step (b) by proving Lemma~\ref{lem2}. 

For notational convenience, we denote the summand on the left-hand side of \eqref{eq:identity_c_1} as $I_c(P)$.
Then the identity \eqref{eq:identity_c_1} can be written as
\eq{
\sum_{P\in S_N} I_c(P)
= \det[\delta_{x_j,y_k}\delta_{a_j,b_k}]_{j,k=1}^N .
\label{eq:identity_c}
}

As shown below, Lemma~\ref{lem2} can be proved in a manner similar to the proof of Proposition~\ref{prop}.
This is because the amplitude $A^{(\bm{s})}_P(\bm{a}\vert\bm{b})$ given in \eqref{eq:A_P_1} is composed of 
the $Y$-operators~\eqref{eq:yang},
\eq{
Y(s_j -s_k) = \frac{(s_j - s_k)\Pi + 2iu}{s_j - s_k + 2iu},
}
whose structure is analogous to that of the factors
$(\lambda_j - \lambda_k + 2iu)/(\lambda_j - \lambda_k - 2iu)$ appearing in \eqref{eq:I_s}.
In the present case, however, these factors are matrices acting on the auxiliary spin space, which makes the calculation more involved.

To handle this complication, it is useful to introduce the graphical representation of $A_P^{(\bm{s})}(\bm{a}\vert \bm{b})$.
The expression~\eqref{eq:A_P_1} admits a natural graphical representation,
which provides an intuitive understanding of the structure of the amplitude $A^{(\bm{s})}_P(\bm{a}\vert \bm{b})$.
Figure~\ref{fig:A_P} shows the corresponding diagrams for (a) $P=\Pi_{1,2}$ and (b) $P=\Pi_{1,3}$.
Each arrow is labeled with the rapidity $s_j$, and the positions 
where it starts and ends correspond to site $j$ and site $P^{-1}(j)$.
Each intersection indicates the position where the $Y$-operator acts and 
which pair of rapidities is scattered by the \( Y \)-operator.  
From the sequence and arrangement of these intersections, one can directly reconstruct 
the corresponding amplitude.
Note that the (braid) Yang--Baxter relation \eqref{eq:yang-baxter} guarantees the invariance of 
the amplitude under a change in the order of intersections.
\begin{figure}[htbp]
\begin{center}
\includegraphics[keepaspectratio, width = 12cm]{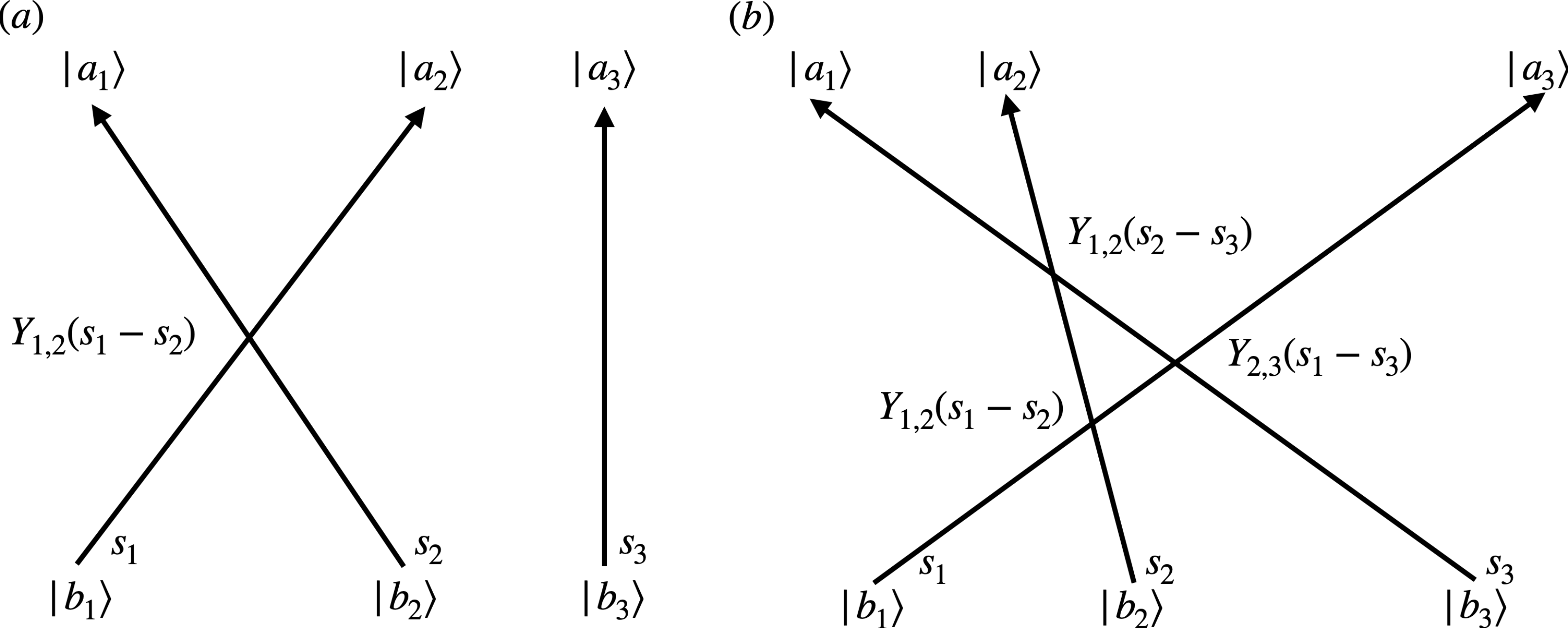}
\caption{Graphical representations of $A^{(\bm{s})}_P(\bm{a}\vert \bm{b})$ for (a) $P=\Pi_{1,2}$. (b) $P=\Pi_{1,3}$.
From Lemma~\ref{lem1}, one has
$A^{(\bm{s})}_{\Pi_{1,2}}(\bm{a}\vert \bm{b}) = \bra{\bm{a}} Y_{1,2}(s_1-s_2)\ket{\bm{b}}$
and 
$A^{(\bm{s})}_{\Pi_{1,3}}(\bm{a}\vert \bm{b}) = \bra{\bm{a}} Y_{1,2}(s_2-s_3) Y_{2,3}(s_1-s_3) Y_{1,2}(s_1-s_2)\ket{\bm{b}}$.
}
\label{fig:A_P}
\end{center}
\end{figure}

We proceed by induction on $N$.
The case $N=1$ is obvious. Let us consider $N=2$.
In this case, one has $I_c(\mrm{id})=\delta_{\bm{x},\bm{y}} \delta_{\bm{a},\bm{b}}$ straightforwardly. 
For $I_c(\Pi_{1,2})$, we make the substitution $z_2\to z_2/z_1$, which leads to
\eqnn{
I_c(\Pi_{1,2}) = -\oint_{|z_1|=r} \frac{dz_1}{2\pi z_1} \oint_{|z_2|=r} \frac{dz_2}{2\pi i z_2}
\bra{\bm{a}} \frac{(z_1^2-1 -z_2 +z_1^2/z_2)\Pi_{1,2}-4u z_1}{z_1^2-1 -z_2 +z_1^2/z_2-4 u z_1}\ket{\bm{b}}
z_2^{x_1-y_2} z_1^{x_2-x_1+y_2-y_1}.
}
Since $|z_2|=r$ on the $z_2$-contour, $1/(z_1^2 - 1 - z_2 + z_1^2 /z_2 -4u z_1)$ is analytic inside the $z_1$-contour.
The term $z_1^{x_2- x_1 + y_2-y_1}$ is also analytic from the conditions $x_1\leq x_2$ and $y_1\leq y_2$.
Therefore, from the residue theorem, it follows that
\eqs{
I_c(\Pi_{1,2})
&= -\oint_{|z_2|=r} \frac{dz_2}{2\pi i z_2} 
z_2^{x_1-y_2} \delta_{a_1,b_2}\delta_{a_2,b_1}  \delta_{x_1,x_2}\delta_{y_1,y_2}
\label{eq:z_1_z_2}
\\
&= -\delta_{x_1,y_2}\delta_{x_2,y_1}\delta_{y_1,y_2}\delta_{a_1,b_2}\delta_{a_2,b_1}
\\
&= -\delta_{x_1,y_2}\delta_{x_2,y_1}\delta_{a_1,b_2}\delta_{a_2,b_1}.
}
In the last line, we use the identity 
$\delta_{x_1,y_2}\delta_{x_2,y_1}\delta_{y_1,y_2}= \delta_{x_1,y_2}\delta_{x_2,y_1}$ 
which holds when $x_1\leq x_2$ and $y_1\leq y_2$.
Thus one obtains the identity~\eqref{eq:identity_c} for the case $N=2$.

We shall prove \eqref{eq:identity_c} for $N \geq 3$ assuming that it holds for $N - 1$ and $N - 2$.
As in the case of Proposition~\ref{prop}, we use the decomposition of the symmetric group:
$
S_N = \bigsqcup_{n=1}^N S^{(n)}_N.
$
The definition of $S^{(n)}_N$ is given in \eqref{eq:dec_sym}.
In what follows, we establish the following identities,
\eqs{
\sum_{P\in S^{(1)}_N \cup S^{(2)}_N }I_c(P) &= \det[\delta_{x_j,y_k}\delta_{a_j,b_k}]_{j,k=1}^N,
\label{eq:identity_G_1_2}
\\
\sum_{P\in S^{(n)}_N} I_c(P) &= 0~~~~\text{for}~n\geq3.
\label{eq:identity_G_n}
}
These two identities prove \eqref{eq:identity_c} for $N \geq 3$.

The following expression for the amplitude with $P \in S_N^{(n)}$ 
plays an important role in proving \eqref{eq:identity_G_1_2} and \eqref{eq:identity_G_n}:
\eq{
A_P^{(\bm{s})}(\bm{a}\vert \bm{b}) = \bra{\bm{a}}\cdots \prod_{1\leq m< n}^{\longrightarrow} Y_{m,m+1}(s_m-s_n)\ket{\bm{b}},
\label{eq:A_P_n}
}
where the omitted part $\cdots$ represents the sequence of the $Y$-operators independent of $z_n$,
and $\overset{\longrightarrow}{\prod}$ denotes the product ordered with site indices increasing from left to right.
We derive \eqref{eq:A_P_n} in appendix~\ref{app:details} using Lemma~\ref{lem1}.
We remark that this expression can also be understood from the graphical representation of $A_P^{(\bm{s})}(\bm{a}\vert \bm{b})$.

We first prove \eqref{eq:identity_G_1_2}. Now consider $P\in S^{(1)}_N$.
In this case, $A^{(\bm{s})}_P(\bm{a}\vert \bm{b})$ is independent of $z_1$, a fact that is evident from \eqref{eq:A_P_n}.
Consequently, the integrand of $I_c(P)$ factorizes into the product of $z_1^{x_1 - y_1-1}$ and a term independent of $z_1$.
This allows us to explicitly perform the $z_1$-integration.
By relabeling the remaining integration variables as $z_j \to z_{j-1}$ ($j = 2, \dots, N$) and 
defining $\tilde{P}\in S_{N-1}$ as $\tilde{P}(j):=P(j+1)-1$, ($j=1,\dots,N-1$), we obtain
\eqnn{
I_c(P)
= \delta_{x_1, y_1} \delta_{a_1, b_1}
\mrm{sgn}(\tilde{P})
\Big[\prod_{j=1}^{N-1}
\oint_{|z_j|=r^{N-j}} \frac{dz_j}{2\pi i z_j}
\Big]
A^{(\bm{s})}_{\tilde{P}}(a_2,\cdots,a_N \vert b_2,\cdots,b_N)
\prod_{j=1}^{N-1} z_{\tilde{P}(j)}^{x_{j+1}-y_{\tilde{P}(j)+1} }.
}
Then, by the induction hypothesis, we arrive at
\eq{
\sum_{P \in S^{(1)}_N} I_c(P) = \delta_{x_1,y_1}\delta_{a_1,b_1}\mrm{det}[\delta_{x_{j},y_k}\delta_{a_j,b_k}]_{j,k=2}^N.
\label{eq:G_1}
}

We next consider the case $P\in S^{(2)}_N$.
From \eqref{eq:A_P_n}, we have
\eqnn{
I_c(P)= \mrm{sgn}(P)
\Big[\prod_{j=1}^N
\oint_{|z_j|=r^{N-j}}\frac{dz_j}{2\pi z_j}
\Big]
\bra{\bm{a}}\cdots Y_{1,2}(s_1-s_2)\ket{\bm{b}}  z_2^{x_1-y_2} z_1^{x_{P^{-1}(1)}-y_1}
\prod_{j=3}^N z^{x_{P^{-1}(j)}-y_{j} }_{j}.
}
After the substitution $z_2\to z_2/z_1$, the above equation becomes
\eqnn{
I_c(P)&= \mrm{sgn}(P)\oint_{|z_1|= r^{N-1}}\frac{dz_1}{2\pi i}\oint_{|z_2|= r^{2N-3}} \frac{dz_2}{2\pi iz_2} \
\prod_{j=3}^N \oint_{|z_j|= r^{N-j}}\frac{dz_j}{2\pi iz_j}
\\
&\times 
\bra{\bm{a}}\cdots 
\frac{(z_1^2-1 -z_2 +z_1^2/z_2)\Pi_{1,2}-4u z_1}{z_1^2-1 -z_2 +z_1^2/z_2-4u z_1} \ket{\bm{b}}
\times  z_2^{x_1-y_2} z_1^{x_{P^{-1}(1)}-x_1+y_2-y_1-1}\prod_{j=3}^N z^{x_{P^{-1}(j)}-y_{j} }_{j}.
} 
Note that, from our choice of the contours, the factor $1/(z_1^2-1 -z_2 +z_1^2/z_2-4u z_1)$ and 
the omitted part $\cdots$ are holomorphic inside the $z_1$-contour.
In addition, when $P^{-1}(1) >2$, it follows that $x_{P^{-1}(1)}-x_1+y_2-y_1-1 \geq 0$.
Therefore, from the residue theorem, we have
\eq{
I_c(P)= 0~~\text{when}~P^{-1}(1) >2.
}
On the other hand, for the case $P^{-1}(1)=2$,  $I_c(P)$ can be evaluated as 
\eqnn{
I_c(P)&= -\mrm{sgn}(\tilde{P}) \delta_{x_1,y_2}\delta_{x_2,y_1}\delta_{a_1,b_2}\delta_{a_2,b_1}
\\
&\quad \times
\Big[\prod_{j=1}^{N-2}
\oint_{|z_j|=r^{N-j}} \frac{dz_j}{2\pi iz_j}
\Big]
A^{(\bm{s})}_{\tilde{P}}(a_3,\cdots,a_N \vert b_3,\cdots,b_N )
 \prod_{j=1}^{N-2} z_{\tilde{P}(j)}^{x_{j+2}-y_{\tilde{P}(j)+2}},
} 
where we define $\tilde{P}\in S_{N-2}$ as $\tilde{P}(j) := P(j+2)-2$, ($j=1,\dots,N-2$), and
used the identity $\delta_{x_1,y_2}\delta_{x_1,x_2}\delta_{y_1,y_2} = \delta_{x_1,y_2}\delta_{x_2,y_1}$ for $x_1\leq x_2$ and 
$y_1\leq y_2$.
Therefore, from the induction hypothesis, it follows that
\eq{
\sum_{P\in S_N^{(2)}}I(P) 
= -\delta_{x_1,y_2}\delta_{x_2,y_1}\delta_{a_1,b_2}\delta_{a_2,b_1} \det[\delta_{x_j,y_k}\delta_{a_j,b_k}]_{j,k=3}^N.
\label{eq:G_2}
} 

Combining \eqref{eq:G_1} and \eqref{eq:G_2}, we have
\eqnn{
\sum_{P\in S^{(1)}_N,S^{(2)}_N } I_c(P) = 
\delta_{x_1,y_1}\delta_{a_1,b_1}\det[\delta_{x_j,y_k}\delta_{a_j,b_k}]_{j,k=2}^N
-\delta_{x_1,y_2}\delta_{x_2,y_1}\delta_{a_1,b_2}\delta_{a_2,b_1} \det[\delta_{x_j,y_k}\delta_{a_j,b_k}]_{j,k=3}^N.
}
On the other hand, the determinant can be expanded as follows,
\eqsnn{
\det[\delta_{x_j,y_k}\delta_{a_j,b_k}]_{j,k=1}^N &= 
\sum_{P\in S_N } \mrm{sgn}(P)\delta_{x_1,y_{P(1)}} \cdots
\delta_{x_N,y_{P(N)}}\delta_{a_1,b_{P(1)}}\cdots\delta_{a_N,b_{P(N)}}
\\
&= \delta_{x_1,y_1} \delta_{a_1,b_1}\det[\delta_{x_j,y_k} \delta_{a_j,b_k}]_{j,k=2}^N 
-\delta_{x_1,y_2}\delta_{x_2,y_1}\delta_{a_1,b_2}\delta_{a_2,b_1} \det[\delta_{x_j,y_k}\delta_{a_j,b_k}]_{j,k=3}^N.
}
In the last line, we used the fact that $\prod_{j=1}^N \delta_{x_j,y_{P(j)}}\delta_{a_j,b_{P(j)}}$ vanishes
unless $P(1)=1$, or $P(1)=2$ and $P(2)=1$.
This restriction arises from the ordering
$x_1\leq \cdots \leq x_N$, $y_1\leq \cdots \leq y_N$, and the distinctness of the pairs $(x_j,a_j)$ and $(y_j,b_j)$.
Thus we obtain \eqref{eq:identity_G_1_2}.

We next prove \eqref{eq:identity_G_n}.
Let us consider $I_c(P)$ for $P\in S^{(n)}_N$, ($n\geq 3$).
When $x_2>y_1$, one can verify that $I_c(P)=0$ as the integrand is analytic inside the $z_1$-contour.
Hence, in the following, we assume $y_1-x_2\geq 0$.

We first evaluate the $z_n$-integral in $I_c(P)$.
After the substitution $z_n\to -1/z_n$, $I_c(P)$ becomes
\eqnn{
I_c(P) =(-1)^{x_1-y_n} \mrm{sgn}(P)\Big[\prod_{\substack{j=1 \\ j \neq n}}^N
\oint_{|z_j|=r^{N-j}} \frac{dz_j}{2\pi i z_j}\Big]
\oint_{|z_n|=1/r^{N-n}} \frac{dz_n}{2\pi i}
 A^{(\bm{s})}_P(\bm{a}\vert \bm{b})
 z_n^{y_n-x_1-1} 
 \prod_{\substack{j =1 \\ j \neq n}}^N z^{x_{P^{-1}(j)} -y_j}_{j}
} 
Note that since the function $s_n=(z_n-1/z_n)/2i$ is invariant under the substitution, $A^{(\bm{s})}_P(\bm{a}\vert \bm{b})$ also remains unchanged.
From the assumption $y_1-x_2\geq 0$, it follows that $y_n-x_1-1\geq 0$ for $n\geq 3$.
Hence the $z_n$-singularity of the integrand originates solely from $A^{(\bm{s})}_P(\bm{a}\vert \bm{b})$.
In \eqref{eq:A_P_n}, the $z_n$-dependent factor takes the following explicit form,
\eq{
Y_{m,m+1}(s_m-s_n)= -2i z_n\frac{2iu +(s_m-s_n)\Pi_{m,m+1}}{z_n^2-2i(s_m+2iu)z_n-1}.
}
The poles of this expression are located at $z_n = \rho(z_m)$ and $z_n=-1/\rho(z_m)$,
where $\rho(z)$ is the solution to $\rho^2 -2i(s(z)+2iu) \rho-1=0$ satisfying $\rho(z) \simeq z$ for $|z|\ll 1$.
The pole $z_n = \rho(z_m)$ lies inside the $z_n$-contour, while the other lies outside.
Hence, by performing the integration with respect to $z_n$, we obtain
\eq{
I_c(P) = (-1)^{x_1-y_n}\mrm{sgn}(P)
\sum_{m=1}^{n-1}
\Big[\prod_{\substack{j=1 \\ j \neq n,m}}^N \oint_{|z_j|=r^{N-j}} \frac{dz_j}{2\pi i z_j}\Big]
\oint_{|z_m|=r^{N-m}} \frac{dz_m}{2\pi i} \frac{2i}{\rho_m^2-1}
\\
\times
\Res_{s_n = s_m+2iu}[A^{(\bm{s})}_{P} (\bm{a} \vert \bm{b})]
 z_m^{x_{P^{-1}(m)}-y_m-1}\rho_m^{y_n-x_1+1} \prod_{\substack{j=1\\ j \neq n,m}}^N z_{j}^{x_{P^{-1}(j)}-y_j},
\label{eq:I_c_2}
}
where we defined the shorthand notation $\rho_m := \rho(z_m)$ and used the relation
\eq{
\Res_{z_n = \rho_m}Y_{m,m+1}(s_m-s_n) =  \frac{2i\rho_m^2}{\rho_m^2-1} \Res_{s_n = s_m+2iu} Y_{m,m+1}(s_m- s_n).
}

We next perform the $z_m$-integration in \eqref{eq:I_c_2}. 
Since $\rho_m = z_m + \order{z_m^2}$ for $|z_m|\ll1$, the following quantity,
\eq{
\frac{1}{\rho_m^2-1}z_m^{x_{P^{-1}(m)}-y_m-1} \rho_m^{y_n-x_1+1},
} 
is holomorphic inside the $z_m$-contour.
Hence again the singularity of the integrand originates solely from $\Res_{s_n = s_m+2iu}A^{(\bm{s})}_{P} (\bm{a} \vert \bm{b})$.
The following factor in this quantitiy may give rise to poles inside the $z_m$-contour,
\eqs{
&\frac{(s_l-s_m)\Pi +2iu}{s_l-s_m +2iu}\qquad(1\leq l < m~\text{and}~P^{-1}(m) < P^{-1}(l)),
\label{eq:(l,m)}
\\
&\frac{(s_l-s_n)\Pi+2iu}{s_l-s_n+2iu}\Big|_{s_n = s_m +2iu} = \frac{(s_l-s_m-2iu)\Pi +2iu}{s_l-s_m},\qquad(1 \leq l <m).
\label{eq:(l,n)}
}
See figure~\ref{fig:A_P_2} for the schematic illustration of these factors.
Note that, with our choice of the contours,
the following $z_m$-dependent factors are analytic inside the $z_m$-contour, 
\eq{
\frac{(s_m-s_j)\Pi+2iu}{s_m-s_j+2iu}\qquad(m<j).
} 
\begin{figure}[htbp]
\begin{center}
\includegraphics[keepaspectratio, width = 10cm]{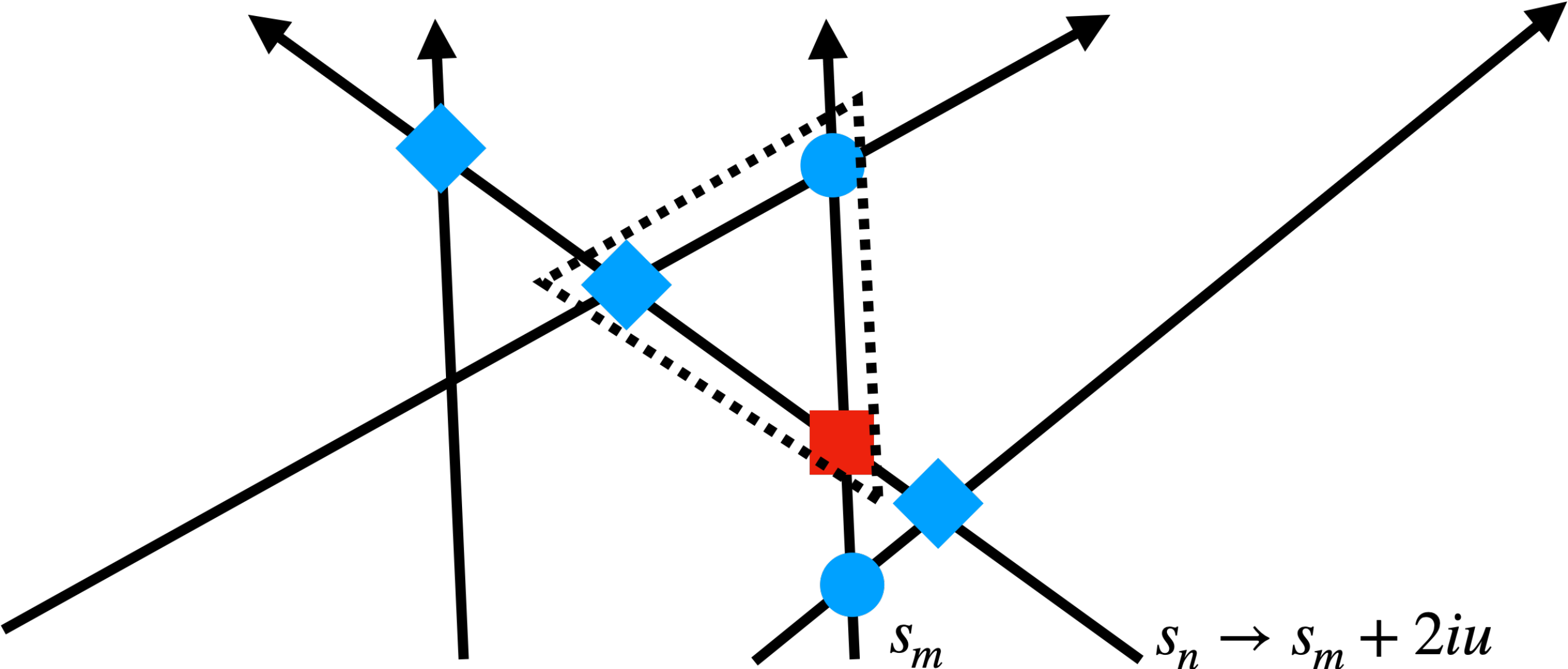}
\caption{Schematic illustration of $\Res_{s_n = s_m+2iu}[A^{(\bm{s})}_P (\bm{a} \vert \bm{b})]$.
The blue circles and diamonds represent the factors given in \eqref{eq:(l,m)} and 
\eqref{eq:(l,n)}, respectively, while
the red square corresponds to the pole-contributing factor in the $z_n$-integration.
The dashed triangle represents the factor given in \eqref{eq:triangle}.
}
\label{fig:A_P_2}
\end{center}
\end{figure}

Interestingly the denominator of \eqref{eq:(l,m)} cancels upon combination with other factors in 
$\Res_{s_n = s_m+2iu}A^{(\bm{s})}_P (\bm{a} \vert \bm{b})$.
In other words, we do not need to account for the corresponding poles in the integration.
The key point is that, by virtue of the Yang-Baxter relation \eqref{eq:yang-baxter}, 
one can choose a decomposition of the permutation $P$ into adjacent transpositions
such that the following sequence appears in 
$\Res_{s_n = s_m+2iu}A^{(\bm{s})}_P (\bm{a} \vert \bm{b})$:
\eqs{
&Y_{j,j+1}(s_l -s_m) Y_{j-1,j}(s_l-s_m-2iu)[-2iu(1-\Pi_{j,j+1})]\label{eq:triangle}
\\
&= \frac{(s_l-s_m)\Pi_{j,j+1} +2iu}{s_l-s_m +2iu}\frac{(s_l-s_m-2iu)\Pi_{j-1,j}+2iu}{s_l-s_m}[-2iu(1-\Pi_{j,j+1})]
\nonumber
\\
&=\frac{(s_l-s_m +2iu -2iu)\Pi_{j,j+1} + 2iu}{s_l-s_m+2iu}\frac{(s_l-s_m+2iu - 4iu)\Pi_{j-1,j} + 2iu}{s_l-s_m} [-2iu(1-\Pi_{j,j+1})]
\nonumber
}
where $j$ is some integer, and the remaining factors are regular at $s_m = s_l +2iu$.
See appendix~\ref{app:details} for an explicit example of such a decomposition.
This sequence corresponds to a triangle formed by the lines $l$, $m$, and $n$
in the graphical representation (see figure~\ref{fig:A_P_2}).
From the final expression above, one sees that if 
\eq{
(1-\Pi_{j,j+1})(2\Pi_{j-1,j}-1) (1-\Pi_{j,j+1}) =0,\label{eq:identity_perm}
}
then $\Res_{s_n = s_m+2iu}A^{(\bm{s})}_P (\bm{a} \vert \bm{b})$ is regular at $s_m = s_l +2iu$.
In fact, \eqref{eq:identity_perm} can be shown by the direct evaluation of its action on arbitrary basis vectors $\vert a_{j-1}, a_j, a_{j+1}\rangle$.
Consequently, in the $z_m$-integration in \eqref{eq:I_c_2}, the only contributing poles are those of \eqref{eq:(l,n)}.
Therefore, we obtain
\eq{
I_c(P) =  \sum_{m=1}^{n-1} \sum_{l=1}^{m-1} T_c(l,m;P),
\label{eq:I(P)}
}
where
\eqnn{
T_c(l,m;P) &:= (-1)^{x_1-y_n} \mrm{sgn}(P) 
\Big[\prod_{\substack{j =1 \\ j \neq n,m}}^N \oint_{|z_j|=r^{N-j} }\frac{dz_j}{2\pi i z_j} \Big]
\frac{2i}{\rho_l^2 -1} \frac{2i}{z_l^2 -1}
\\
&\times 
\Res_{s_m = s_l }\big[\Res_{s_n = s_m+2iu}[A^{(\bm{s})}_P (\bm{a} \vert \bm{b})]\big]
z_l^{x_{P^{-1}(l)} + x_{P^{-1}(m)} -y_l - y_m +1}\rho_l^{y_n-x_1+1}
\prod_{\substack{j=1\\ j \neq n,m,l}}^N z_j^{x_{P^{-1}(j)}-y_j}.
}

Now we observe that
\eq{
T_c(l,m;P) + T_c(l,m;\Pi_{l,m}P) = 0,\label{eq:T_charge}
}
which is analogous to \eqref{eq:T_spin}.
To prove this, it suffices to show that 
\eq{
\Res_{s_m = s_l }\big[\Res_{s_n = s_m+2iu}[A^{(\bm{s})}_P (\bm{a} \vert \bm{b})]\big] =
\Res_{s_m = s_l }\big[\Res_{s_n = s_m+2iu}[A^{(\bm{s})}_{\Pi_{l,m}P} (\bm{a} \vert \bm{b})]\big].
\label{eq:(n,m)(n,l)}
}
Although the validity of \eqref{eq:(n,m)(n,l)} can be seen from figure~\ref{fig:A_P_3}, 
we show it explicitly using the $Y$-operator in appendix~\ref{app:details}.
\begin{figure}[htbp]
\begin{center}
\includegraphics[keepaspectratio, width = 12cm]{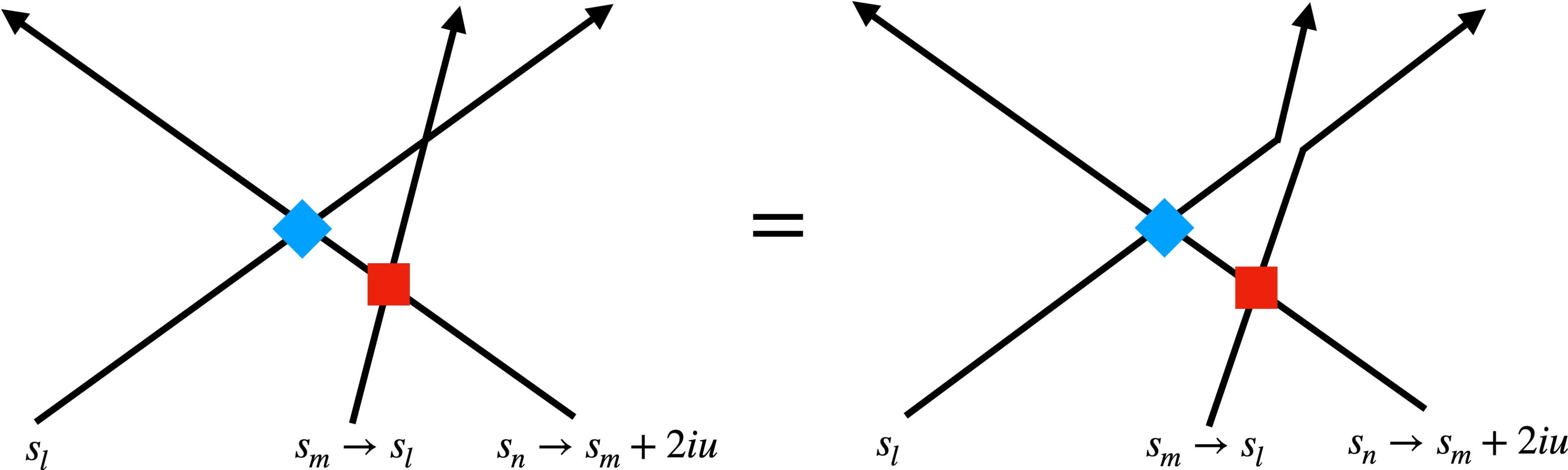}
\caption{Schematic illustration of \eqref{eq:(n,m)(n,l)}.
The red square and blue diamonds represent the pole-contributing factors in the $z_n$- and $z_m$-integrations, respectively.
The intersection between the lines $l$ and $m$ corresponds to the $Y$-operator $Y(s_l-s_m)$.
Note that this operator reduces to the identity operator when $s_l=s_m$.
}
\label{fig:A_P_3}
\end{center}
\end{figure}

Using \eqref{eq:T_charge} and the decomposition of $S^{(n)}_N$ by even and odd permutations as 
$S^{(n)}_N = S^{(n)}_{N,\mrm{even} }  \bigsqcup \Pi_{l,m} S^{(n)}_{N,\mrm{even}} $, 
we obtain
\eqnn{
\sum_{P \in S^{(n)}_N } I_c(P) &= \sum_{P\in S^{(n)}_N }\sum_{m=1}^{n-1}\sum_{l=1}^{m-1} T_c(l,m;P)
\\
& = \sum_{m=1}^{n-1}\sum_{l=1}^{m-1} \sum_{P\in S^{(n)}_{N,\mrm{even}} } [T_c(l,m; P) + T_c(l,m;\Pi_{l,m} P)]
\\
& = 0
}
Thus, we arrive at the identity~\eqref{eq:identity_G_n}, 
which, together with the identity~\eqref{eq:identity_G_1_2}, completes the proof of Lemma~\ref{lem2}.

\bigskip
Finally, from Lemma~\ref{lem1} and Lemma~\ref{lem2}, we establish the proof of Step (b), thereby completing the proof of Theorem~\ref{thm}.

\section{Conclusion}\label{sec:conclusion}
In this work, we have presented an exact integral formula for the multi-particle 
propagator of the one-dimensional Fermi--Hubbard model on the infinite lattice. 
Through this formula, the time dependence of arbitrary finite-particle wave functions 
can be computed explicitly in the Hubbard model. 
The formula can also be applied to open quantum models related to the Hubbard model~\cite{Medvedyeva_exact,Nakagawa_2021,Yoshida_liouvillian,Marche_universality}. 
Thus, it establishes a microscopic starting point for the exact analysis 
of nonequilibrium dynamics in these systems.

The proof of the formula has been established by verifying two conditions: 
the satisfaction of the Schr\"{o}dinger equation and the fulfillment of the initial condition.
While the former follows directly from the nested Bethe ansatz~\cite{Lieb_hubbard,Essler_hubbard}, the latter presents a non-trivial challenge.
We resolved this by leveraging the nested structure of the Bethe wave function; 
specifically, we performed successive residue calculations with respect to both spin and charge rapidities.
Crucially, our derivation works directly on the infinite lattice, thereby avoiding reliance on the string hypothesis.

The integral formula for general $N$ and $M$ presented in this work is expected to serve as a versatile starting point 
for investigating nonequilibrium dynamics in the Hubbard model and related systems. 
Combined with appropriate physical quantities, it could yield tractable expressions even for many-body observables. 
In particular, for the tight-binding chain with dephasing noise---a related open quantum model---we have obtained 
a simple Fredholm determinant formula for the full counting statistics of the current 
by applying the formula to the cases $N=2n$ and $M=n$ ($n \in \mathbb{N}$), 
which will be reported in a forthcoming paper. 
We also expect that, in the Hubbard model itself with finite-density initial states, 
our integral formula may be applicable to relatively simple observables such as return probabilities.
Indeed, in the XXZ spin chain, 
several studies have attempted to investigate the nonequilibrium dynamics of finite-density initial states
by emplyoing an integral representation of the propagator~\cite{Saenz2022, Fujimoto_quantum}.
Our results should provide a basis for future studies in a similar direction accessing the finite-density dynamics of the Hubbard model.
Another natural direction is to extend the present approach to the Maassarani model~\cite{Maassarani_1998}, which is the SU($n$) generalization of the Hubbard model.
Our framework may provide a useful starting point for analyzing nonequilibrium dynamics 
in systems with higher-internal symmetries.

\ack
TI, KF, and TS are grateful to Fabian Essler, Cristian Giardin\`a, Takashi Imamura, Matteo Mucciconi for helpful discussions and comments.
The work of TI has been supported by JST SPRING, Japan Grant Number JPMJSP2180.
The work of KF has been supported by JSPS KAKENHI Grant No. JP23K13029.
The work of TS has been supported by JSPS KAKENHI Grants No. JP21H04432, No. JP22H01143, and No. JP23K22414.

\appendix
\section{Open quantum models related to the Hubbard model}\label{app:open}
We explain open quantum models relevant to the Hubbard model with complex interaction strength and 
the application of the propagator \eqref{eq:def_pro} to these models.
In this appendix, we consider open quantum systems whose time evolution is governed by 
the Gorini-Kossakowski-Sudarshan-Lindblad (GKSL) equation~\cite{Lindblad,Gorini}:
\begin{equation}
\frac{d}{dt} \hat{\rho}(t) =\mathcal{L}[\hat{\rho}(t)]:= -i [\hat{H},\hat{\rho}(t)] + \sum_{x}\left[ 2\hat{L}_x \hat{\rho}(t) \hat{L}_x^\dagger 
-\{\hat{L}^\dagger_x \hat{L}_x,\hat{\rho}(t)\}\right].
\label{eq:GKSL}
\end{equation}
Here, $\mathcal{L}$ is the superoperator referred to as the Liouvillian, $\hat{H}$ denotes the system's Hamiltonian, and 
the Lindblad operator $\hat{L}_x$ represents the dissipation caused by coupling to the environment.
For a comprehensive introduction to open quantum systems, we refer the reader to~\cite{Breuer}.

\subsection{Tight-binding chain with dephasing noise}
A tight-binding chain with dephasing noise is described by the GKSL equation~\eqref{eq:GKSL}
with the one-dimensional tight-binding Hamiltonian
\eq{
\hat{H} := -\sum_{x\in \mathbb{Z}} (\hat{a}^\dagger_x \hat{a}_{x+1}+ \hat{a}^\dagger_{x+1} \hat{a}_x) 
}
and the Lindblad operator representing dephasing noise 
\eq{
\hat{L}_x := \sqrt{2\gamma_{\mrm{dep}} } \hat{a}^\dagger_x \hat{a}_x.
}

Under this setting, as shown in \cite{Medvedyeva_exact}, 
the Liouvillian $\mathcal{L}$ can be mapped to the Hubbard Hamiltonian with imaginary interaction.
To show this, define superoperators, 
\eqs{
&\mathcal{C}_{x,\downarrow}[\hat{\rho}] := \hat{a}_x \hat{\rho},
\\
&\mathcal{C}_{x,\uparrow}[\hat{\rho}] := \Big[\prod_{x\in \mathbb{Z}} e^{i\pi \hat{a}^\dagger_x \hat{a}_x}\Big]\hat{\rho} 
\hat{a}_x^\dagger,
}
and the Hilbert-Schmidt inner product for operators $\hat{A}$ and $\hat{B}$,
\eq{
(\hat{A},\hat{B}) := \mathrm{Tr}[\hat{A}^\dagger \hat{B}].
}
Then, the adjoint superoperators of $\mathcal{C}_{x,\uparrow}$ and $\mathcal{C}_{x,\downarrow}$ are given by
\eqs{
&\mathcal{C}^\dagger_{x,\downarrow}[\hat{\rho}] = \hat{a}^\dagger_x  \hat{\rho},
\\
&\mathcal{C}^\dagger_{x,\uparrow}[\hat{\rho}] 
= \Big[\prod_{x\in \mathbb{Z}} e^{i\pi \hat{a}^\dagger_x \hat{a}_x}\Big]\hat{\rho} \hat{a}_x.
}
These superoperators satisfy the canonical anti-commutation relations,
\eq{
\{\mathcal{C}_{x,\sigma}, \mathcal{C}_{y,\sigma'}\} = 0, \quad
\{\mathcal{C}_{x,\sigma}, \mathcal{C}_{y,\sigma'}^\dagger\} = \delta_{x,y}\delta_{\sigma,\sigma'}.
}
Therefore they can be regarded as the creation and annihilation operators of spin-1/2 fermions.
The vacuum state annihilated by all $\mathcal{C}_{x,\sigma}$ 
corresponds to the projector onto the original vacuum:
\eq{
\hat{\Omega} := \ket{0}\bra{0}.
}

In terms of the above superoperators, the Liouvillian is equivalent to the Hubbard Hamiltonian \eqref{eq:hubbard}
with imaginary interaction:
\eq{
i\mathcal{U}\mathcal{L}\mathcal{U}^\dagger = -\sum_{x\in \mathbb{Z},\sigma \in \{\uparrow,\downarrow\}} 
(\mathcal{C}^\dagger_{x,\sigma} \mathcal{C}_{x+1,\sigma} + \mathcal{C}^\dagger_{x+1,\sigma} \mathcal{C}_{x,\sigma}) 
+ 2i\gamma_{\mrm{dep}} \sum_{x\in \mathbb{Z}} (2\mathcal{C}^\dagger_{x_,\uparrow} \mathcal{C}_{x,\uparrow} 
\mathcal{C}_{x,\downarrow}^\dagger  \mathcal{C}_{x,\downarrow} - \mathcal{C}^\dagger_{x_,\uparrow} \mathcal{C}_{x,\uparrow} 
-\mathcal{C}_{x,\downarrow}^\dagger  \mathcal{C}_{x,\downarrow} ),
}
where we define the unitary superoperator as
\eq{
\mathcal{U} := \prod_{x\in \mathbb{Z}} e^{i\pi \mathcal{C}^\dagger_{2x-1,\downarrow}\mathcal{C}_{2x-1,\downarrow}}.
}
From this correspondence, the time evolution of a density matrix element is equivalent to
the propagator of the Hubbard model with imaginary interaction:
\eq{
\bra{x_1,\cdots,x_M} e^{\mathcal{L}t } [\ket{y_{M+1},\cdots,y_N}\bra{y_{1},\cdots,y_M }] \ket{x_{M+1},\cdots, x_N}
\\
= \Big[\prod_{j=1}^M (-1)^{x_j-y_j}\Big] e^{-2\gamma_{\mrm{dep}} N }\psi_t (\bm{x};\bm{a}_0\vert \bm{y};\bm{a}_0),
}
where 
\eq{
\bm{a}_0 := (\underbrace{\downarrow,\cdots,\downarrow}_M,\underbrace{\uparrow,\cdots,\uparrow}_{N-M}).
}
Thus once the exact expression of the propagator is obtained,
one can exactly calculate
the time dependence of the density matrix for finite-particles in a tight-binding chain with dephasing noise.
Furthermore, as shown in \cite{Medvedyeva_exact}, correlation functions satisfy the same equation of motion as
the density matrix elements.
Therefore, physical quantities for infinitely many particles can be calculated from the propagator of finite particles.

We remark that a non-unitary circuit, which provides an integrable trotterization of this model, has been proposed in \cite{Sa_2021}.

\subsection{Hubbard model under two-body loss}
The Hubbard model subject to two-body loss is described by the GKSL equation~\eqref{eq:GKSL}
with the one-dimensional Fermi-Hubbard Hamiltonian \eqref{eq:Hubbard}
\eq{
\hat{H} = - \sum_{x\in \mathbb{Z}}\sum_{\sigma= \downarrow,\uparrow} \left(\hat{c}^\dagger_{x,\sigma} \hat{c}_{x+1,\sigma} 
+ \hat{c}^\dagger_{x+1,\sigma}\hat{c}_{x,\sigma}\right) + 4u
\sum_{x\in \mathbb{Z}} \hat{n}_{x,\downarrow} \hat{n}_{x,\uparrow}
}
and the Lindblad operators representing the two-body loss
\eq{
\hat{L}_x := \sqrt{4\gamma_{\mrm{loss}}} \hat{c}_{x,\uparrow}\hat{c}_{x,\downarrow}.
}
As shown in~\cite{Nakagawa_2021}, the eigenvalues and eigenstates of this model are determined by the Hubbard model
with complex interaction,
\eqs{
\hat{H}_{\mathrm{eff}} &:= \hat{H} -i  \sum_{x\in \mathbb{Z}} \hat{L}_x ^\dagger \hat{L}_x 
\\
&= -\sum_{x\in \mathbb{Z}}\sum_{\sigma= \downarrow,\uparrow} \left(\hat{c}^\dagger_{x,\sigma} \hat{c}_{x+1,\sigma} 
+ \hat{c}^\dagger_{x+1,\sigma}\hat{c}_{x,\sigma}\right) + 4(u-i\gamma_{\mrm{loss}})
\sum_{x\in \mathbb{Z}} \hat{n}_{x,\downarrow} \hat{n}_{x,\uparrow}.
\label{app:hubbard}
}
Furthermore, the probability that $N$ particles initially located 
at sites $\bm{y}=(y_1,\cdots,y_N)$ with spin configurations $\bm{b} =(b_1,\cdots,b_N) $ are found 
at sites $\bm{x}= (x_1,\cdots,x_N)$ with spin configurations $\bm{a} = (a_1,\cdots,a_N)$ at time $t$ is given by
the propagator of the Hubbard model with the complex interaction~\eqref{app:hubbard},
\eq{
\mathrm{Pr}[\bm{x};\bm{a}\vert \bm{y};\bm{b}] = |\psi_t(\bm{x};\bm{a}\vert \bm{y};\bm{b})|^2.
}
This fact can be derived from the decomposition of the Liouvillian into two parts, $\mathcal{L}= \mathcal{K}+\mathcal{A}$
~\cite{Yoshida_liouvillian, Marche_universality}:
\eq{
&\mathcal{K}[\hat{\rho}] :=  -i \hat{H}_{\mrm{eff}} \hat{\rho} +i \hat{\rho} \hat{H}^\dagger_{\mrm{eff}},
\\
&\mathcal{A}[\hat{\rho}] := 8\gamma_{\mrm{loss}}  \sum_{x\in \mathbb{Z}} 
\hat{c}_{x,\uparrow}\hat{c}_{x,\downarrow}\hat{\rho} \hat{c}^\dagger_{x,\downarrow}\hat{c}^\dagger_{x,\uparrow}.
}
Note that $\mathcal{K}$ conserves the total number of particles, whereas $\mathcal{A}$ decreases it.

\section{Inhomogeneous XXX structure of the scattering amplitude }\label{app:yang}
In this appendix, we describe the relation between the scattering amplitude~\eqref{eq:spin_wave} 
and the Bethe ansatz wave function of the inhomogeneous XXX spin chain, 
and derive \eqref{eq:sca} based on this relation.

As explained in section \ref{sec:bethe}, 
the scattering amplitude for charges given in \eqref{eq:spin_wave} is the Bethe ansatz wave function 
of the inhomogeneous XXX spin chain~\cite{Essler_hubbard,Korepin_1993}.
To show this, we first introduce the $L$-matrix at site $n$ ($n=1,\dots,N$) as
\eq{
L_n (\lambda-\mu) := 
\frac{1}{\lambda-\mu +iu }
\begin{pmatrix}
\lambda-\mu + iu \sigma_n^z & 2iu \sigma_n^-
\\
2iu \sigma_n^+ & \lambda-\mu - iu \sigma_n^z
\end{pmatrix},
}
where $\sigma_n^{+}$, $\sigma_n^{-}$, and $\sigma^z_n$ denote the spin-1/2 raising, lowering, and $z$-component Pauli matrices at site $n$, respectively.
The monodromy matrix of the inhomogeneous XXX model is then defined by
\eq{
T(\lambda\vert \bm{s}P):= L_N(\lambda-s_{P(N)})\cdots L_1(\lambda-s_{P(1)}).
}
Note that $s_j$ is given by \eqref{eq:s_j}, and it represents the inhomogeneity in the above equation.

To gain an intuitive understanding of the $L$-matrix and the monodromy matrix, 
it is helpful to introduce their graphical representations.
The $L$-matrix is depicted as the crossed lines shown in figure~\ref{fig:L_monodromy}~(a).
In this figure, the vertical line with the rapidity $\mu$ corresponds to the quantum space, describing the spin-1/2 degree of freedom at site $n$
whereas the horizontal line with the rapidity $\lambda$ represents the auxiliary space.
The monodromy matrix is represented by a single horizontal line intersecting $N$ vertical lines, 
as illustrated in figure~\ref{fig:L_monodromy}~(b), 
since the monodromy matrix is defined as an ordered product of $L$-matrices.
\begin{figure}[htbp]
\begin{center}
\includegraphics[keepaspectratio, width = 12cm]{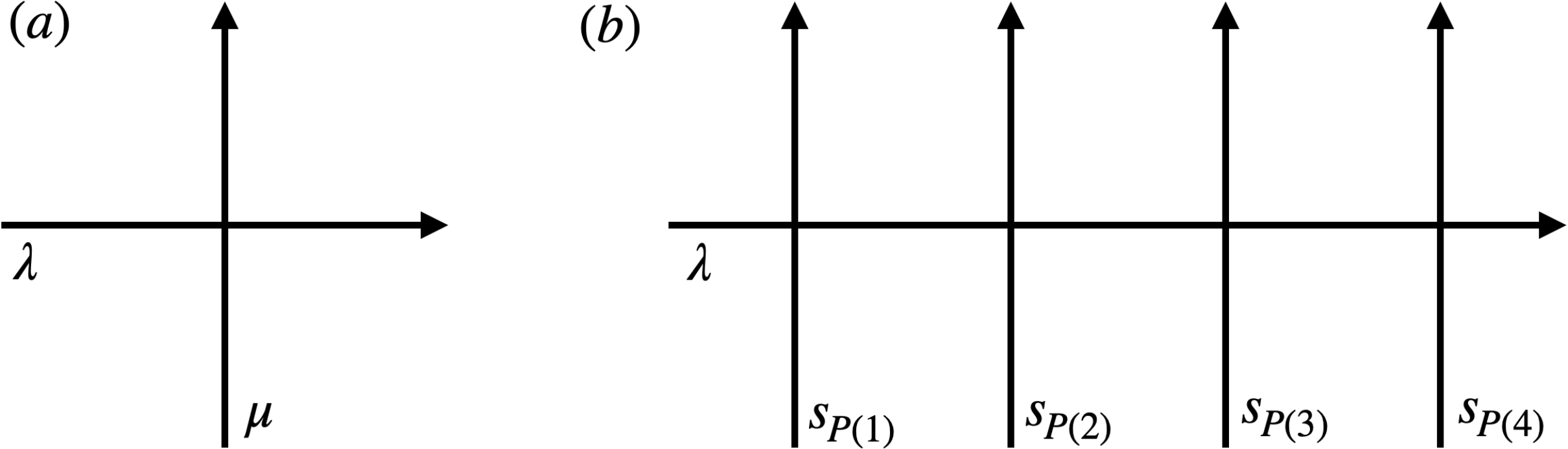}
\caption{Graphical representations of the $L$-matrix and monodromy matrix.
(a) $L$-matrix $L_n(\lambda-\mu)$, (b) Monodromy matrix $T(\lambda\vert \bm{z}P)$ for the case of $N=4$.}
\label{fig:L_monodromy}
\end{center}
\end{figure}

The $B$-operator is given by the (1,2)-entry of the monodromy matrix:
\eq{
B(\lambda\vert \bm{s}P) := T(\lambda \vert \bm{s}P)_{1,2}.
}
Acting on the all-up state with these $B$-operators yields the Bethe ansatz state of the inhomogeneous XXX model,
\eq{
B(\lambda_1\vert \bm{s}P)\cdots B(\lambda_M\vert \bm{s}P)\ket{\uparrow,\cdots,\uparrow}.
}
Then one sees that this state is equivalent to \eqref{eq:spin_wave}:
\eq{
\ket{\bm{s}P;\bm{\lambda}} = 
\prod_{1\leq m< n\leq M} \Big[\frac{\lambda_m-\lambda_n}{\lambda_m-\lambda_n -2iu} \Big]
B(\lambda_1\vert \bm{s}P)\cdots B(\lambda_M\vert \bm{s}P)\ket{\uparrow,\cdots,\uparrow}.
\label{eq:spin_wave_B}
}
This can be derived by considering the so-called ``multisite model''; see~\cite{Essler_hubbard,Korepin_1993}.
\begin{figure}[htbp]
\begin{center}
\includegraphics[keepaspectratio, width = 12cm]{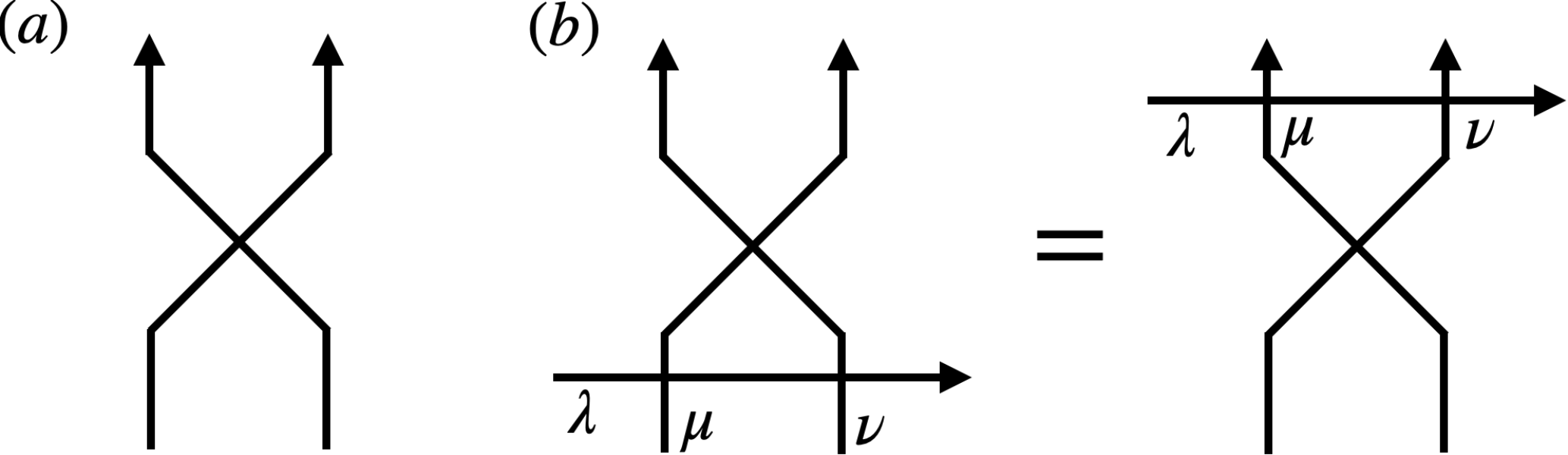}
\caption{Graphical representations of (a) the $Y$-operator and (b) the Yang-Baxter relation for the $Y$-operator and the $L$-matrix.}
\label{fig:yang}
\end{center}
\end{figure}

The $Y$-operator can be regarded as the $\check{R}$-matrix acting on the quantum space rather than on the auxiliary space as depicted in
figure~\ref{fig:yang}~(a),
and it satisfies the Yang-Baxter relation with the $L$-matrix,
\eq{
Y_{n,n+1}(\mu-\nu) L_{n+1}(\lambda -\nu) L_n(\lambda -\mu)  
= 
L_{n+1}(\lambda - \mu) L_n(\lambda-\nu)   Y_{n,n+1}(\mu-\nu).
}
See figure~\ref{fig:yang}~(b) for the graphical representation of the Yang-Baxter relation.
Then, using this relation together with \eqref{eq:spin_wave_B}, we can make the following calculation,
\eqnn{
Y_{n,n+1}(s_{P(n)} - s_{P(n+1)})\ket{\bm{s}P;\bm{\lambda}} &=~ \vcenteredinclude{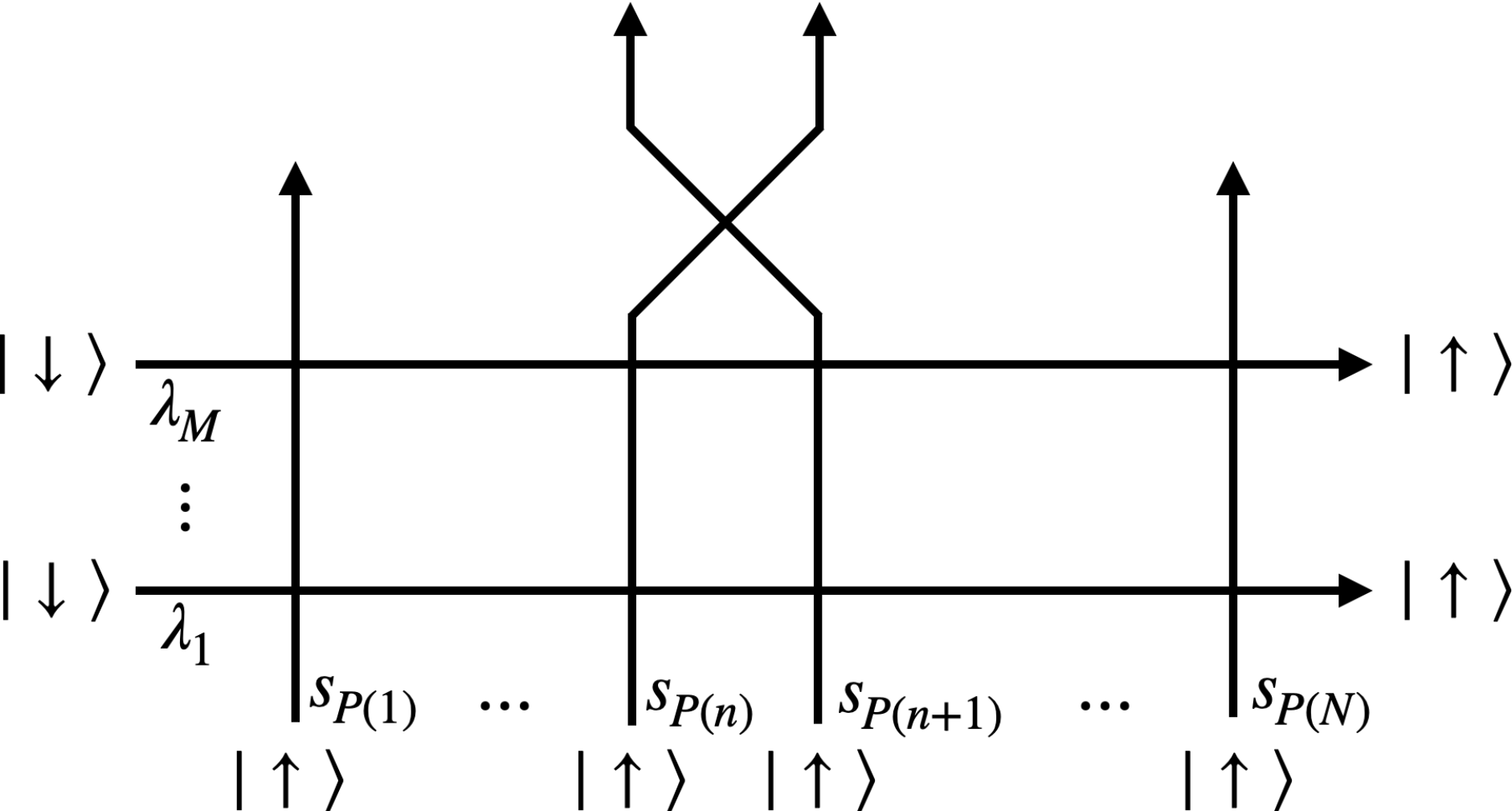}
\\
&=~\vcenteredinclude{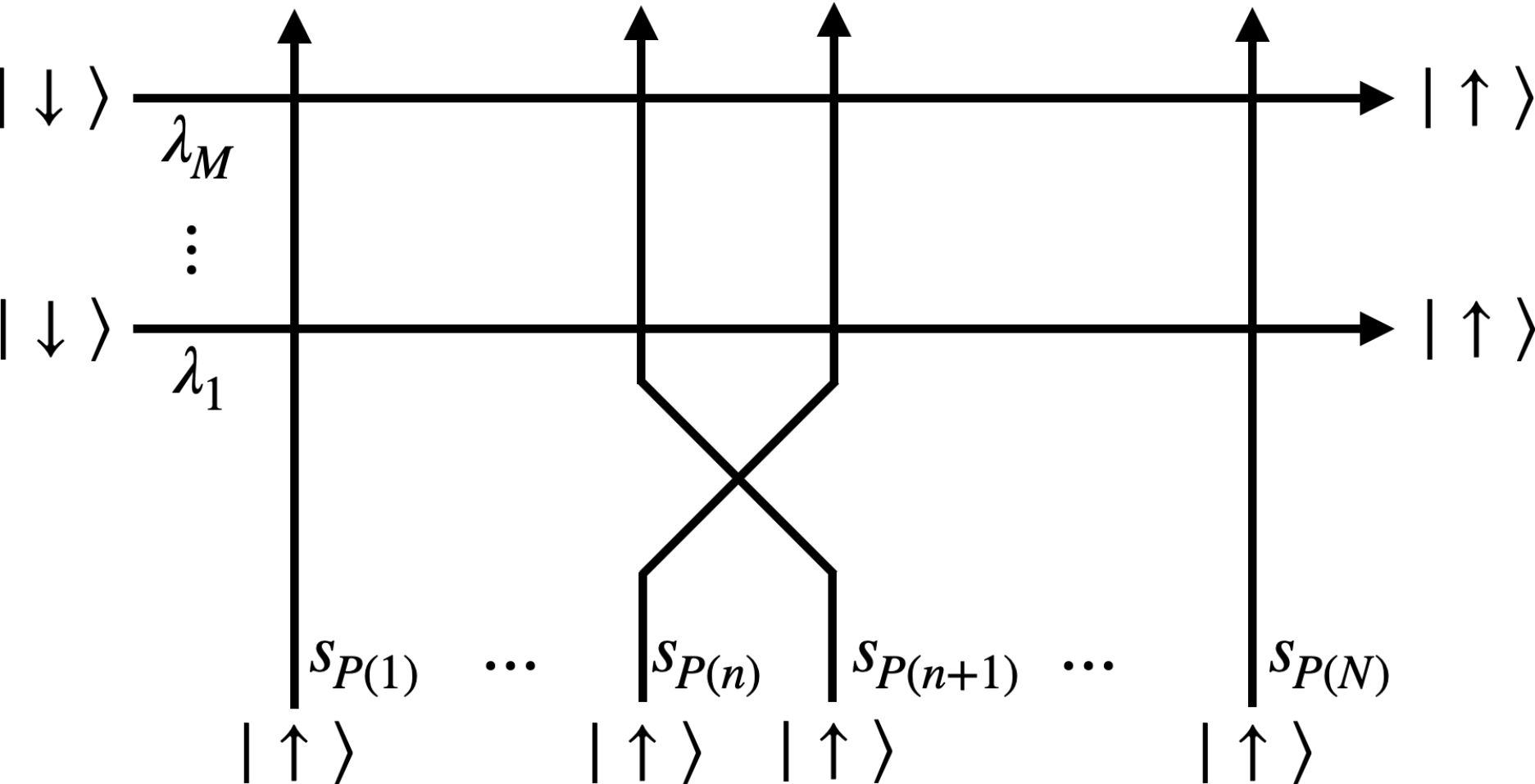}
\\
&= \ket{\bm{s}P\Pi_{n,n+1};\bm{\lambda}},
}
where we used the relation $Y_{n,n+1}(\mu) \ket{\uparrow,\cdots,\uparrow} =\ket{\uparrow,\cdots,\uparrow} $ in the last line.
Thus we complete the derivation of \eqref{eq:sca}.

\section{ Technical details in the proof of Lemma~\ref{lem2}}\label{app:details}
Here we derive \eqref{eq:A_P_n}, \eqref{eq:triangle}, and \eqref{eq:(n,m)(n,l)}. 
In the derivation, we use the following property of $A^{(\bm{s})}_P(\bm{a}\vert \bm{b})$,
\eq{
A_{QR}(\bm{a}\vert \bm{b}) = \sum_{\bm{a}'\in \{\uparrow,\downarrow\}^N } A^{(\bm{s}')}_{R}(\bm{a}\vert \bm{a}')A_Q^{(\bm{s})}(\bm{a}'\vert \bm{b}),
\label{eq:A_P_product}
}
where $s'_j := s_{Q(j)}$ for $j=1,\ldots,N$.
This relation can be proved directly from Lemma~\ref{lem1}.

\subsection{Derivation of \eqref{eq:A_P_n}}
We show the following expression for the amplitude with $P = S_N^{(n)}$,
\eq{
A_P^{(\bm{s})}(\bm{a}\vert \bm{b}) = \bra{\bm{a}}\cdots \prod_{1\leq m< n}^{\longrightarrow} Y_{m,m+1}(s_m-s_n)\ket{\bm{b}},
\label{eq:A_P_n_app}
}
where the omitted part $\cdots$ represents the sequence of the $Y$-operators independent of $z_n$,
and $\overset{\longrightarrow}{\prod}$ denotes the product ordered with site indices increasing from left to right.

Let us decompose $P\in S_N^{(n)}$ as $P = U_{1,n} Q$ with 
$U_{1,n}:= \overset{\longleftarrow}{\prod}_{1\leq m < n} \Pi_{m,m+1}$ and some $Q$.
Then using \eqref{eq:A_P_product}, we have
\eqsnn{
A_P^{(\bm{s})}(\bm{a}\vert \bm{b}) 
&= \sum_{\bm{a}' \in \{\uparrow,\downarrow\}^N } A^{(\bm{s}')}_Q(\bm{a}\vert \bm{a}')
A^{(\bm{s})}_{U_{1,n}} (\bm{a}'\vert \bm{b})
\\
&=\sum_{\bm{a}' \in \{\uparrow,\downarrow\}^N } A^{(\bm{s}')}_Q(\bm{a}\vert \bm{a}')
\bra{\bm{a}'} \prod^{\longrightarrow}_{1\leq m < n} Y_{m,m+1} (s_m - s_n) \ket{\bm{b}}.
}
In the second line, we used \eqref{eq:A_P_1}. Note that $s'_1 =s_n $ and $Q(1) = 1$.
Hence $A^{(\bm{s}')}_Q(\bm{a}\vert \bm{a}')$ is independent of $z_n$.
Thus we obtain \eqref{eq:A_P_n_app}.

\subsection{Derivation of \eqref{eq:triangle}}
We show that for $P\in S_N^{(n)}$ and $1\leq l < m < n$ satisfying $P^{-1}(m)<P^{-1}(l)$,
the factors in the residue $\Res_{s_n = s_m+2iu} A^{(\bm{s})}_P(\bm{a}\vert \bm{b})$
can be arranged so that the following sequence appears
\eq{
Y_{j,j+1}(s_l-s_m)Y_{j-1,j}(s_l-s_m-2iu) [-2iu(1-\Pi_{j,j+1})],
\label{eq:triangle_1}
}
where $j$ is some integer and the remaining factors are regular at $s_m = s_l +2iu$.

Let us set $\mu = P^{-1}(l)$ and $\nu = P^{-1}(m)$ for notational convenience.
We decompose $P\in S_N^{(n)}$ as $P = Q\Pi_{P^{-1}(n),P^{-1}(l)}=Q\Pi_{1,\mu}$ with some permutation $Q$.
Using \eqref{eq:A_P_product}, we obtain
\eq{
A^{(\bm{s})}_P(\bm{a}\vert \bm{b}) = \sum_{\bm{a}' \in \{\uparrow,\downarrow\}^N } 
A^{(\bm{s}')}_{\Pi_{1,\mu}}(\bm{a}\vert \bm{a}')A^{(\bm{s})}_{Q}(\bm{a}'\vert \bm{b}).
}
Since it follows that $m<n$ and $Q^{-1}(m) = \nu< Q^{-1}(n)=\mu$ from the assumptions,
$A^{(\bm{s})}_{Q}(\bm{a}' \vert \bm{b})$ does not contain the factor $Y(s_m-s_n)$.
Therefore one has
\eq{
\Res_{s_n = s_m+2iu} A^{(\bm{s})}_P(\bm{a}\vert \bm{b}) = \sum_{\bm{a}' \in \{\uparrow,\downarrow\}^N } 
\Res_{s_n = s_m +2iu}[ A^{(\bm{s}')}_{\Pi_{1,\mu}}(\bm{a}\vert \bm{a}')] \times A^{(\bm{s})}_{Q}(\bm{a}'\vert \bm{b})|_{s_n = s_m+2iu}.
}
Furthermore, since $l < m$ and 
$Q^{-1}(l)=1 < Q^{-1}(m)=\nu$, 
the factor $A^{(\bm{s})}_{Q}(\bm{a}'\vert \bm{b})\big|_{s_n = s_m+2iu}$
does not become singular at $s_m = s_l + 2iu$.
Thus it suffices to show that
$\Res_{s_n = s_m+2iu} A_{\Pi_{1,\mu}}(\bm{a}\vert \bm{b})$
contains the factor~\eqref{eq:triangle_1},
while the remaining factors are regular at $s_m = s_l +2iu$.
This can be proved by the direct calculation as follows,
\eqnn{
\Res_{s_n = s_m +2iu}A^{(\bm{s}')}_{\Pi_{1,\mu}}(\bm{a}\vert \bm{a}') 
&=
\bra{\bm{a}} \prod_{1\leq j \leq \nu-2}^{\longrightarrow} Y_{j,j+1}(s_{P(j+1)} - s_m -2iu)
\prod^{\longleftarrow}_{\nu+1 \leq j \leq \mu-1} Y_{j,j+1}(s_{l} - s_{P(j)})
\\
&\times 
Y_{\nu,\nu+1}(s_l - s_m ) Y_{\nu-1,\nu}(s_l - s_m-2iu) [-2iu(1-\Pi_{\nu,\nu+1})]
\\
&\times 
\prod_{1\leq j \leq \nu-2}^{\longleftarrow}Y_{j,j+1}(s_{l} - s_{P(j+1)})
\prod_{\nu+1 \leq j \leq \mu-1}^{\longrightarrow} Y_{j,j+1}(s_{P(j)} - s_m -2iu)\ket{\bm{a}'},
}
where $\overrightarrow{\prod}$ denotes the ordered product with increasing site indices from left to right, while $\overleftarrow{\prod}$
denotes the product with decreasing indices from left to right.

\subsection{Derivation of \eqref{eq:(n,m)(n,l)}}
We prove the following relation for $1\leq l < m <n$ and $P\in S_N^{(n)}$,
\eq{
\Res_{s_m = s_l }\big[\Res_{s_n = s_m+2iu}[A^{(\bm{s})}_P (\bm{a} \vert \bm{b})]\big] =
\Res_{s_m = s_l }\big[\Res_{s_n = s_m+2iu}[A^{(\bm{s})}_{\Pi_{l,m}P} (\bm{a} \vert \bm{b})]\big].
}

Let us set $\mu =P^{-1}(l)$ and $\nu =P^{-1}(m)$ for notational convenience and assume that $\mu < \nu$ without loss of generality.
Then, by noting $\Pi_{l,m}P = P\Pi_{\mu,\nu}$ and using \eqref{eq:A_P_product}, 
we can prove the above relation as follows,
\eqsnn{
&\Res_{s_m = s_l }\big[\Res_{s_n = s_m+2iu}[A^{(\bm{s})}_{\Pi_{l,m}P} (\bm{a} \vert \bm{b})]\big] =
\sum_{\bm{a}'\in \{\downarrow,\uparrow\}^N}
\Res_{s_m = s_l }\big[\Res_{s_n = s_m+2iu}[A^{(\bm{s})}_{P} (\bm{a}' \vert \bm{b})]\big]
\\
&\times
\bra{\bm{a}}
\Big(\prod_{\mu\leq j\leq \nu-2}^{\longrightarrow}Y_{j, j+1}(s_{P(j+1)}-s_{m} ) \Big)Y_{\nu-1,\nu}(s_l-s_m)
\Big(\prod_{\mu\leq  j \leq \nu-2}^{\longleftarrow} Y_{j,j+1}(s_l-s_{P(j+1)}) \Big)\ket{\bm{a}'}|_{s_m= s_l}
\\
&= \sum_{\bm{a}'\in\{\downarrow,\uparrow\}^N} 
\braket{\bm{a}\vert\bm{a}'}
\Res_{s_m = s_l }\big[\Res_{s_n = s_m+2iu}[A^{(\bm{s})}_{P} (\bm{a}' \vert \bm{b})]\big]
\\
& = \Res_{s_m = s_l }\big[\Res_{s_n = s_m+2iu}[A^{(\bm{s})}_{P} (\bm{a} \vert \bm{b})]\big],
}
where we used $Y_{j,j+1}(0)= I$ and $Y_{j,j+1}(\lambda)Y_{j,j+1}(-\lambda)=I$ in the second line.

\bibliographystyle{unsrturl}
\bibliography{ref}

\end{document}